\newcommand{\bN} { {\mathbb{N}}}
\newcommand{\bQ} { {\mathbb{Q}}}
\newcommand{\bZ} { {\mathbb{Z}}}
\def\res{\operatorname{res}}
\newcommand{\myitem}[1]{%
	\item[(#1)]\protected@edef\@currentlabel{#1}%
}
\def\eatspace#1{#1}
\def\step#1#2{\par\kern1pt\hangindent#2em\hangafter=1\noindent\rlap{\small#1}\kern#2em\relax\eatspace}
\let\set\mathbb
\def\<#1>{\langle#1\rangle}
\def\val{\operatorname{val}}
\def\disc{\operatorname{Disc}}
\def\diag{\operatorname{diag}}
\def\im{\operatorname{im}}
\def\lt{\operatorname{lt}}
\def\codim{\operatorname{codim}}
\def\Tr{\operatorname{Tr}}
\def\S{\mathcal{S}}
\newtheorem{theorem}{Theorem}
\newtheorem{prop}[theorem]{Proposition}
\newtheorem{lemma}[theorem]{Lemma}
\newtheorem{algorithm}[theorem]{Algorithm}
\newtheorem{defi}[theorem]{Definition}
\newtheorem{example}[theorem]{Example}
\makeatletter \@addtoreset{equation}{section}
\begin{document}
\title{Lazy Hermite Reduction and Creative Telescoping for Algebraic~Functions
%Breaking Through the Bivariate Barrier
\thanks{S.\ Chen was partially supported by the NSFC
	grants 11871067, 11688101, the Fund of the Youth Innovation Promotion Association, CAS,
	and the National Key Research and Development Project 2020YFA0712300.  L.\ Du
	was supported by the NSFC grant 11871067 and the Austrian FWF grant P31571-N32. M. \ Kauers was supported by the
	Austrian FWF grant P31571-N32.}}
%about the article that should go on the front page should be
%placed here. General acknowledgments should be placed at the end of the article.}

\author{\bigskip Shaoshi Chen$^{a, b}$, Lixin Du$^{a, b, c}$, Manuel Kauers$^c$\\
$^a$KLMM, Academy of Mathematics and Systems Science,\\ Chinese Academy of Sciences,\\ Beijing 100190, China\\
$^b$School of Mathematical Sciences,\\ University of Chinese Academy of Sciences,\\ Beijing 100049, China\\
$^c$Institute for Algebra, Johannes Kepler University,\\ Linz, A4040,  Austria\\
{\sf schen@amss.ac.cn,  dulixin17@mails.ucas.ac.cn}\\
{\sf manuel.kauers@jku.at}
}

\maketitle

\begin{abstract}
Bronstein's lazy Hermite reduction is a symbolic integration technique that
reduces algebraic functions to integrands with only simple poles without the prior computation of an integral basis.
We sharpen the lazy Hermite reduction by combining it with the polynomial reduction to solve
the decomposition problem of algebraic functions. The sharpened reduction is then used
to design a reduction-based telescoping algorithm for algebraic functions in two variables.
\end{abstract}

\section{Introduction}

The integration problem for algebraic functions has a long history
that can be traced back at least to the work of Euler and others on elliptic integrals~\cite{Abhyankar1976}.
In 1826, Abel initiated the study of general integrals of algebraic functions, which are now called abelian integrals~\cite{Abel1826}.
Liouville in 1833 proved that if the integral of an algebraic function is an elementary function, then it must be the sum of an algebraic function and
a linear combination of logarithms of algebraic functions (see~\cite[Chapter IX]{Lutzen1990} for a detailed historical overview). In 1948, Ritt presented an algebraic approach
to the problem of integration in finite terms in his book~\cite{Ritt1948}.
Based on Liouville's theorem and
some developments in differential algebra~\cite{Ritt1950}, Risch in 1970 finally solved this
classical problem by giving a complete algorithm~\cite{Risch1970}. After Risch's work, more efficient algorithms
have been given due to the emerging developments in symbolic computation~\cite{Davenport1981, Trager84, Bronstein1990, BronsteinBook}.
Passing from indefinite integration to definite integration with parameters, the central problem shifts to finding
linear differential equations satisfied by the integrals of algebraic functions with parameters. In this direction, the first work was started by Picard in~\cite{picard33} and
later a systematical method was developed by Manin in his work on proving the function-field analogue of Mordell's conjecture~\cite{Manin1958}.
In the 1990s, another powerful method was developed by Almkvist and Zeilberger~\cite{Almkvist1990} by including
the trick of differentiating under the integral sign in the framework of creative telescoping~\cite{Zeilberger1991}.

In a given differential ring $R$ with the derivation $D$, one can ask two fundamental problems: one is the \emph{integrability problem}, i.e., deciding
whether a given element $f\in R$ is of the form $D(g)$ for some $g\in R$, if such a $g$ exists, we say that $f$ is \emph{integrable} in $R$;
another is the \emph{decomposition problem}, i.e., decomposing a given
element $f\in R$ into the form $D(g) + r$ with $g, r\in R$ and $r$ is minimal in certain sense and $r=0$ if and only if $f$ is integrable in $R$. For algebraic functions, the integrability problem was studied
by Liouville in his two memoirs~\cite{Liouville1833a, Liouville1833b}. Hermite reduction~\cite{Hermite1872} solves the decomposition problem
for rational functions. Trager in his thesis~\cite{Trager84} extended Hermite reduction to the algebraic case.
His algorithm requires the computation of an integral bases in the beginning. In order to avoid this expensive step, Bronstein~\cite{bronstein98a} introduced the lazy
Hermite reduction that partially solves the decomposition problem.  The first contribution in this paper is a
sharpened version of the lazy Hermite reduction. We combine the lazy Hermite reduction with a further reduction, namely the polynomial reduction, in order to
solve the decomposition problem completely.

When the differential ring $R$ is equipped with two derivations $D_1, D_2$, one can also
consider the \emph{creative telescoping} problem: for a given element $f\in R$, decide whether there exist $c_0, \ldots, c_r \in R$, not all zero,  such that $D_2(c_i) =0$ for all $i \in \{0, \ldots, r\}$ and
\[c_r D_1^r(f) + \cdots + c_0 f = D_2(g)\quad \text{for some $g\in R$}.\]
The operator $L = c_r D_1^r + \cdots + c_0$ if exists is called a \emph{telescoper} for~$f$.
For every algebraic function there exists such a telescoper, and many construction algorithms have been
developed in~\cite{Manin1958, Zeilberger1990, chen12d, bostan13b, chen16a}. Our second contribution is
an adaption of the reduction-based approach from~\cite{chen16a} {using the sharpened} lazy Hermite reduction.

The remainder of this paper is organized as follows.
First we recall Bronstein's idea of lazy Hermite reduction {in Section~\ref{SECT:lazy}}.
Instead of an integral basis, it uses a so-called ``suitable basis'' of the function field.
In Section~\ref{SECT:suitable} we have a closer look at these bases.
After developing the polynomial reduction in Section~\ref{SECT:polyred}, we will present the telescoping
algorithm for algebraic functions based on the sharpened lazy Hermite reduction in Section~\ref{SECT:tele}.
We conclude our paper by some experimental comparisons among several telescoping algorithms in Section~\ref{SECT:experiments}.

%// issues to be addressed:
%1. we mention lhr as option already in the 2016/17 papers. why is an additional paper needed?
%2. we need to point out gaps in bronstein's paper without being unfriendly
%3. implementation does not perform very well; no comparison to integral basis computation available
%4. no comparison to reduction based telescoping via adjoints available

\section{Lazy Hermite Reduction} \label{SECT:lazy}

Trager's generalization of Hermite reduction to algebraic functions works as follows~\cite{Trager84,geddes92,bronstein98,chen16a}. Let $K$ be a field of characteristic zero and {$m\in K(x)[y]$ be an irreducible polynomial
	over $K(x)$. Then $A=K(x)[y]/\<m>$ is an algebraic extension of $K(x)$}. When there is no ambiguity, we also use $y$ to represent the element $y+\<m>$ in $A$, which can be viewed as a root of $m$. Let $W=(\omega_1,\dots,\omega_n)$
be an integral basis of~$A$.  Let $f=\frac1{uv^d}\sum_{k=1}^n a_k\omega_k\in A$ with $d>1$ and
$u,v,a_1,\dots,a_n\in K[x]$ such that $\gcd(u,v)=\gcd(v,v')=\gcd(v,a_1,\dots,a_n)=1$. We seek
$b_1,\dots,b_n,c_1,\dots,c_n\in K[x]$ such that for $g=\frac1{v^{d-1}}\sum_{k=1}^n b_k\omega_k$
and $h=\frac1{uv^{d-1}}\sum_{k=1}^n c_k\omega_k$ such that
\[
f = g' + h.
\]
The $g$ in this equation can be found by solving a certain linear system over $K[x]/\<v>$,
and once $g$ is known, $h$~can be computed as $h=f-g'$.
Let $e\in K[x]$ and $M\in K[x]^{n\times n}$ be such that $eW'=MW$, and assume (without loss
of generality) that $e\mid uv$. Then the coefficient vector $b=(b_1,\dots,b_n)$ of $v^{d-1}g$
satisfies
\begin{equation}\label{eq:sys}
b \bigl(uve^{-1}M - (k-1) u v' I_n\bigr) \equiv (a_1,\dots,a_n) \bmod v
\end{equation}
and using that $W$ is an integral basis, it can be shown that this linear system has a
unique solution, see~\cite{Trager84,chen16a} for further details. Applying the process repeatedly,
we can eliminate all multiple poles from the integrand, i.e., we can find $g$ and $h$ such
that $f=g'+h$ and $h=q^{-1}\sum_{k=1}^n p_k\omega_k$ for some polynomials $p_1,\dots,p_n,q$
with $q$ squarefree.

If $W$ is not an integral basis, the linear system~\eqref{eq:sys} may or may not have a unique
solution.

\begin{example}
	Let $m=y^2-x$ and $f=\frac{y}{(x+1)x^2}$.
	\begin{enumerate}
		\item For $W=(x,xy)$ we have $e=2x$ and $M=\begin{pmatrix}2&0\\0&3\end{pmatrix}$. Applying the reduction to $f=\frac1{(x+1)x^3} xy$ leads
		to the linear system
		\[
		b\begin{pmatrix}
		-2(x+1) & 0 \\ 0 & -(x+1)
		\end{pmatrix} \equiv (0,2)\bmod x
		\]
		which has a unique solution.
		\item For $W=(x,y)$ we have $e=2x$ and $M=\begin{pmatrix}1&0\\0&1\end{pmatrix}$.
		Applying the reduction to $f=\frac1{x^2(x+1)}y$ leads to the linear system
		\[
		b \begin{pmatrix}
		0 & 0 \\
		0 & -(x+1)
		\end{pmatrix} \equiv (0, 2)\bmod x,
		\]
		which is solvable, but not uniquely.
		\item
		For $W=(x,(x+1)y)$ we have $e=2x(x+1)$ and $M=\begin{pmatrix}2(x+1)&0\\0&3x+1\end{pmatrix}$.
		Applying the reduction to
		\[
		f=\frac1{x^2(x+1)^2}(x+1)y
		\]
		leads to the linear system
		\[
		b \begin{pmatrix}
		-2x & 0 \\ 0 & -(x+1)
		\end{pmatrix} \equiv (0,2)\bmod x(x+1)
		\]
		which has no solutions.
	\end{enumerate}
\end{example}

Note that none of the bases above in the example above is an integral basis. However, all the bases
consist of integral elements and have the property that $e$ is squarefree.
Bronstein~\cite{bronstein98a} calls such a basis a \emph{suitable basis} and observes that whenever
we apply Hermite reduction to a suitable basis and find that the linear system~\eqref{eq:sys}
has no solution, then we can construct from any unsolvable system an integral element of $A$
that does not belong to the $K[x]$-module generated by the elements of~$W$. We can then
replace $W$ by a suitable basis of an enlarged $K[x]$-module which also includes $A$ and
proceed with the reduction.

\begin{example}\label{ex:lhr}
	We continue the previous example.
	\begin{enumerate}
		\item For $W=(x,xy)$, no basis update is needed because the linear system has a unique solution.
		\item For $W=(x,y)$, the right kernel element $(1,0)$ of the matrix in the linear system
		translates into the new integral element $x+1$, which does not belong to the $K[x]$ module
		generated by $x$ and $y$ in~$A$. A basis of the module generated by $x$, $y$, and $x+1$
		is $(1,y)$.
		\item For $W=(x,(x+1)y)$, from the lack of solutions of the linear system it can be deduced
		that $xy$ is an integral element not belonging to the module generated by $x$ and $y$ in~$A$.
		A basis of the module generated by $x$, $y$, and $xy$ is $(x,y)$.
	\end{enumerate}
\end{example}

Starting from a basis $W$ consisting of integral elements, there can be at most finitely many
basis updates before we reach an integral basis. Therefore, it takes at most finitely many
basis updates (and possibly fewer than needed for reaching an integral basis) to complete the
reduction process. This variant of Hermite reduction, which avoids the potentially
expensive computation of an integral basis at the beginning, is called lazy Hermite reduction.
Its final result is a suitable basis $\tilde W$ of $A$ and $g,h\in A$
such that $f=g'+h$ and the coefficients of $h$ with respect to $\tilde W$ are rational functions
with a squarefree common denominator.
In the examples above, we may find $\tilde W=(1,y)$, $g=-\frac{2y}{x}$, and $h=-\frac{y}{x(x+1)}$.

One of the key features of Hermite reduction is that we can decide the integrability problem.
For example, if we write a rational function $f\in K(x)$ in the form $f=g'+h$ for some $g,h\in K(x)$
where $h$ has a squarefree denominator and numerator degree less than denominator degree, then $f$
admits an integral in $K(x)$ if \emph{and only if} $h=0$.
Trager generalizes this criterion to algebraic functions as follows. By a change of variables, he
first ensures that the integrand $f$ has a double root at infinity. Then he performs Hermite reduction
with respect to an integral basis that is normal at infinity. If this gives $g,h$ such that $f=g'+h$,
then $f$ is integrable in $A$ if \emph{and only if} $h=0$~\cite{Trager84,chen16a}.

Unfortunately, this criterion does not extend to the lazy version of Hermite reduction. Even if we
produce a double root of the integrand at infinity and make the suitable basis normal at infinity
(which amounts to a local integral basis computation that we actually would prefer to avoid altogether),
a nonzero remainder $h$ does not imply that $f$ is not integrable.

\begin{example}\label{ex:doulbe root at infy}
	Let $m=y^2-x$ and $f=\frac{y}{x^3}$. For $W=(1,(x^2+1)y)$ we have $e=2x(x^2+1)$ and $M=\begin{pmatrix}0&0\\0&5x^2+1\end{pmatrix}$. The lazy Hermite reduction finds $g=-\frac{2 (1 + x^2) y}{3 x^2}$ and $h=\frac{y}{3x}$.
	{Here} $f=\left(-\frac{2y}{3x^2}\right)^\prime$ is integrable and has a root of order $\geq2$ at infinity, but the remainder $h$ is nonzero.
\end{example}

Somewhat surprisingly, Bronstein does not address this issue in his report~\cite{bronstein98a}.
In the following sections, we develop a fix using the technique of polynomial reduction.

\section{Suitable bases}\label{SECT:suitable}

Let $A=K(x)[y]/\<m>$ with $m\in K[x,y]$ being an irreducible polynomial over $K(x)$. Let $\bar K$ be the algebraic closure of $K$. If $n=\deg_y(m)$, there are $n$ distinct solutions in the field
\[
\bar K\<\<x-a>>:=\bigcup_{r\in\bN\setminus\{0\}}\bar K((\,(x-a)^{1/r}))
\]
of formal Puiseux series around $a\in \bar K$. There are also $n$ distinct solutions in the field
\[
\bar K\<\<x^{-1}>>:=\bigcup_{r\in\bN\setminus\{0\}}\bar K((x^{-1/r}))
\]
of formal Puiseux series around~$\infty$. For a fixed $a\in\bar K\cup\{\infty\}$, let $y_1,\ldots,y_n$ be all $n$ roots of $m$ in $\bar K\<\<x-a>>$ (or $\bar K\<\<x^{-1}>>$ if $a=\infty$).
There are $n$ distinct $K(x)$-embeddings $\sigma_1,\ldots,\sigma_n$ of $A$ into $\bar K\<\<x-a>>$ (or $\bar K\<\<x^{-1}>>$ if $a=\infty$) such that $\sigma_i(f(y))=f(y_i)$ for any $f\in A$.
Then for each $a\in \bar K\cup\{\infty\}$, we can associate $f\in A$ with $n$ series $\sigma_i(f)$ for $i=1,\ldots,n$.
Moreover, if we equip the fields $A$, $\bar K \<\<x-a>>$ and $\bar K\<\<x^{-1}>>$ with natural differentiations with respect to~$x$, then each embedding $\sigma_i$ is a differential homomorphism, i.e., $\sigma_i(f^\prime)=\sigma_i(f)^\prime$ for any $f\in A$.

A nonzero Puiseux series around $a\in \bar K$ can be written in the form
\[P=\sum_{i\geq0}c_i(x-a)^{r_i},\]
where $c_i\in\bar K,c_0\neq0$ and $r_i\in\bQ$. The {\em valuation map} $\nu_a$ on $\bar K\<\<x-a>>$ is defined by $\nu_a(P)=r_0$ if $P$ is nonzero and $\nu_a(0)=\infty$. Replacing $x-a$ by $\frac{1}{x}$, we get the {\em valuation} map $\nu_{\infty}$ on $\bar K\<\<x^{-1}>>$. A series $P$ in $\bar K\<\<x-a>>$ or $\bar K\<\<x^{-1}>>$ is called {\em integral} if its valuation is nonnegative. The value function $\val_a: A\to \bQ\cup\{\infty\}$ is defined by
\[\val_a(f):=\min_{i=1}^n\nu_a(\sigma_i(f))\]
for any $f\in A$. An element $f\in A$ is called {\em (locally) integral} at $a\in\bar K\cup\{\infty\}$ if $\val_a(f)\geq0$, i.e., every series associated to $f$ (around $a$) is integral. The element $f\in A$ is called {\em (globally) integral} if $\val_a(f)\geq0$ at every $a\in \bar K$ (``at all finite places''), i.e., $f$ is locally integral at every $a\in\bar K$.
A basis of the $K[x]$-module of all integral elements of $A$ is called an integral basis of~$A$.
The elements of $A$ that are locally integral at some point $a\in K$ form a $K(x)_a$ module, where $K(x)_a$ is the subring of $K(x)$ consisting of all rational functions which do not have a pole at~$a$.
In the case $a=\infty$, $K(x)_a$ is the ring of all rational functions $p/q$ with $\deg_x(p)\leq\deg_x(q)$.
A basis of the $K(x)_a$-module of locally integral elements of $A$ is called a local integral basis at~$a$.

For a series $P\in\bar K\<\<x-a>>$, the smallest positive integer $r$ such that $P\in \bar K((\,(x-a)^{1/r}))$ is called the {\em ramification index} of $P$. If $r>1$, the series $P$ is said to be {\em ramified}. For an element $f\in A$, a point $a\in\bar K$ is called a {\em branch point} of $f$ if one of the series associated  to $f$ around $a$ is ramified.

Let $W=(\omega_1,\ldots,\omega_n)$ be a $K(x)$-vector space basis of~$A$. Throughout this section, let $e\in K[x]$ and $M=((m_{i,j}))_{i,j=1}^n\in K[x]^{n\times n}$ be such that
\begin{equation*}
e W^\prime=MW
\end{equation*}
and $\gcd(e,m_{1,1},m_{1,2},\ldots,m_{n,n})=1$.
As already mentioned in the previous section, $W$ is a called a \emph{suitable basis} if $e$ is squarefree and {$\omega_i$'s are integral for each $i$}. Every integral basis is a suitable basis, see~\cite[Lemma 3]{chen16a}.
Now we explore some further properties of such bases.

\begin{lemma}\label{lem:e branch}
	Let $W$ be an integral basis of~$A$. Let $e\in K[x]$ and $M \in K[x]^{n\times n}$ be such that $e W^\prime=M W$. If $a\in\bar K$ is a root of $e$, there exists $\omega\in W$ such that $a$ is a branch point of $\omega$.
\end{lemma}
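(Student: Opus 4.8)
The plan is to argue by contraposition. Suppose $a\in\bar K$ is a root of $e$ but is \emph{not} a branch point of any element of $W$; I will show that the minimal polynomial $p\in K[x]$ of $a$ over $K$ divides $e$ and every entry $m_{i,j}$ of $M$, contradicting the standing assumption $\gcd(e,m_{1,1},\dots,m_{n,n})=1$.

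The first step is to translate the branch-point hypothesis into a statement about derivatives. If $a$ is not a branch point of $\omega_i$, then each series $\sigma_k(\omega_i)$, $k=1,\dots,n$, is an unramified Puiseux series around $a$ of nonnegative valuation, hence lies in $\bar K[[x-a]]$. The elementary point is that the derivative of a power series in $x-a$ is again a power series in $x-a$ (whereas a ramified series such as $(x-a)^{1/2}$ has a derivative with negative valuation; this is exactly where ramification would matter). Since each $\sigma_k$ is a differential homomorphism, $\sigma_k(\omega_i')=\sigma_k(\omega_i)'\in\bar K[[x-a]]$, and therefore $\val_a(\omega_i')\ge 0$: the element $\omega_i'$ is locally integral at $a$.

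The second step is to read off the coefficients of $\omega_i'$ with respect to $W$. Writing $\omega_i'=\sum_{j=1}^n r_{i,j}\omega_j$ with $r_{i,j}\in K(x)$, we have $m_{i,j}=e\,r_{i,j}$. Now I use that an integral basis localizes to a local integral basis at every finite place: since $K[x]$ is a Dedekind domain, $W$ is also a basis of the integral closure of $K[x]_{(p)}$ in $A$, and by Galois conjugacy $\val_a(f)\ge0$ holds for $a$ if and only if it holds for all roots of $p$, so $\omega_i'$ being locally integral at $a$ forces $r_{i,j}\in K[x]_{(p)}$, i.e.\ $p$ does not divide the denominator of $r_{i,j}$ in lowest terms.

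Finally, from $m_{i,j}=e\,r_{i,j}\in K[x]$, together with $p\mid e$ and $p\nmid$ the denominator of $r_{i,j}$, primality of $p$ gives $p\mid m_{i,j}$ for every $i,j$. Hence $p$ is a common divisor of $e,m_{1,1},\dots,m_{n,n}$, contradicting the hypothesis, and the lemma follows. I expect the second step --- the precise justification that an integral basis remains an integral basis of the localization $K[x]_{(p)}$, together with the handling of the conjugates of $a$ --- to be the only point that needs care; it is a standard fact about Dedekind domains and can be quoted, while the rest is a one-line Puiseux-series computation.
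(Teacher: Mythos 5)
Your proof is correct and is essentially the paper's argument run in the contrapositive: both rest on the same two facts, namely that an integral basis is a local integral basis at $a$ (so local integrality is read off from the coefficients $m_{i,j}/e$) and that an unramified, integral Puiseux expansion has a power-series derivative, combined with the normalization $\gcd(e,m_{1,1},\dots,m_{n,n})=1$. The paper simply argues directly, picking one row $i$ with some $m_{i,j}(a)\neq0$ and deriving $\val_a(\omega_i')<0$, whereas you show all rows would become divisible by the minimal polynomial of $a$; the substance is the same.
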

\begin{proof}
	Let $W=(\omega_1,\ldots,\omega_n)$ and $M=((m_{i,j}))_{i,j=1}^n\in K[x]^{n \times n}$, and let $a$ be a root of~$e$. By $\gcd(e,m_{1,1},\ldots,m_{n,n})=1$, there is $i\in\{1,\ldots,n\}$ such that
	\begin{equation*}
	\omega_i^\prime=\frac{1}{e}\sum_{j=1}^nm_{i,j}\omega_j,
	\end{equation*}
	where $a$ is not a common root of $m_{i,1},\ldots,m_{i,n}$.
	
	From the above expression of $\omega_i^\prime$, we get $\omega_i^\prime$ does not belong to the module generated by $W$ over $K[x]$. Since $W$ is a local integral basis at $a$, it follows that $\omega_i$ is not locally integral at $a$. Thus  $\val_a(\omega_i^\prime)<0$.
	
	If $a$ were not a branch point of $\omega_i$, then all Puiseux series at~$a$ associated to $\omega_i$ were power series, and then all Puiseux series at $a$ associated to $\omega_i'$ were power series,
	implying $\val_a(\omega_i')\geq0$. As we have seen before that $\val_a(\omega_i')<0$, it follows that $a$ must be a branch point.
\end{proof}

In order to give a converse of Lemma~\ref{lem:e branch}, we consider the series associated to an algebraic function. For a ramified Puiseux series $P=\sum_{i\geq0}c_i(x-a)^{r_i}\in\bar K\<\<x-a>>$, let
\[\delta(P):=\min\{r_i\mid\text{$r_i\in\bQ\setminus\set Z$, $i\geq0$}\}\]
be the minimal fractional exponent of $P$. Define $\delta(P)=\infty$ if the series $P$ is not ramified. Then $\delta(P^\prime)=\delta(P)-1$. Similar as the valuation of a series, the function $\delta$ satisfies \[\delta(P+Q)\geq\min\{\delta(P),\delta(Q)\}\]
for any $P,Q\in \bar K\<\<x-a>>$.
\begin{lemma}\label{lem:e2 branch}
	Let $W$ be a $K(x)$-vector space basis of~$A$. Let $e\in K[x]$ and $M\in K[x]^{n\times n}$ be such that $e W^\prime=M W$. Let $a\in \bar K$. If there exists some $\omega\in W$ such that $a$ is a branch point of $\omega$, then $a$ is a root of~$e$.
\end{lemma}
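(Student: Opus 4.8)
The plan is to argue by contraposition: if $a$ is not a root of $e$, then $a$ cannot be a branch point of any $\omega\in W$. So suppose $e(a)\neq 0$. The key identity to exploit is the relation $eW'=MW$, which expresses each derivative $\omega_i'$ as a $K(x)$-linear combination of $\omega_1,\dots,\omega_n$ whose coefficients, namely the entries of $\frac1e M$, have no pole at $a$ (since $e(a)\neq0$ and $M\in K[x]^{n\times n}$). I would then track the fractional-exponent defect $\delta$ through this relation. Concretely, fix an embedding $\sigma$ of $A$ into $\bar K\<\<x-a>>$ and apply it to $eW'=MW$; since $\sigma$ is a differential homomorphism, this gives $e\,\sigma(\omega_i)' = \sum_j m_{i,j}\,\sigma(\omega_j)$ for each $i$.

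The main step is to show that the series $P_i:=\sigma(\omega_i)$ cannot be ramified, i.e.\ $\delta(P_i)=\infty$ for all $i$. Let $\rho:=\min_i\delta(P_i)\in\bQ\cup\{\infty\}$, and suppose for contradiction that $\rho<\infty$. Using $\delta(P+Q)\ge\min\{\delta(P),\delta(Q)\}$ and the fact that $\delta(cP)\ge\delta(P)$ (indeed $=\delta(P)$ for $c$ with $\nu_a(c)\in\bZ$, which holds for $c\in K(x)_a$, i.e.\ rational functions regular at $a$), the right-hand side $\sum_j m_{i,j} P_j$ has $\delta \ge \rho$ for every $i$; since $e$ is a constant times a unit at $a$ — more precisely $\nu_a(e)=0$ — the left-hand side has $\delta(e P_i') = \delta(P_i') = \delta(P_i) - 1$. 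Taking $i$ to be an index achieving $\delta(P_i)=\rho$, we get $\rho - 1 = \delta(eP_i') = \delta\!\left(\sum_j m_{i,j}P_j\right)\ge \rho$, a contradiction. Hence $\rho=\infty$, so every $P_i=\sigma(\omega_i)$ is unramified. Since $\sigma$ was an arbitrary embedding, no series associated to any $\omega_i$ around $a$ is ramified, i.e.\ $a$ is not a branch point of any $\omega\in W$.

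The delicate point — the ``hard part'' — is handling the coefficients $m_{i,j}/e$ correctly: one must be sure that multiplying a Puiseux series by a rational function that is regular at $a$ does not decrease $\delta$, and that dividing by $e$ (equivalently, that $\nu_a(e)=0$) leaves $\delta$ unchanged. This is where the hypothesis $e(a)\neq0$ is used in an essential way, and it is worth isolating the elementary fact that $\delta(cP)=\delta(P)$ whenever $c\in\bar K\<\<x-a>>$ has integer valuation, since multiplication by such a $c$ permutes the exponent classes modulo $\bZ$. Everything else — the subadditivity of $\delta$ over sums, the shift $\delta(P')=\delta(P)-1$, and the extremal-index argument — is routine once that coefficient bookkeeping is in place. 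I would also remark that, unlike Lemma~\ref{lem:e branch}, this direction needs neither that $W$ is an integral basis nor that $e$ is squarefree; the bare relation $eW'=MW$ suffices.
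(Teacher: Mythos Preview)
Your proof is correct and follows essentially the same argument as the paper's: both fix an embedding $\sigma$, take the minimum $r=\min_i\delta(\sigma(\omega_i))$, pick an index $i$ attaining it, and compare $\delta$ on the two sides of $e\,\sigma(\omega_i)'=\sum_j m_{i,j}\sigma(\omega_j)$ to obtain $r-1\geq r$ under the assumption $e(a)\neq0$. The only cosmetic difference is that the paper argues directly (assuming a branch point and concluding $e(a)=0$) while you argue by contraposition; your additional remarks isolating why $\delta(cP)=\delta(P)$ for $c$ with $\nu_a(c)=0$ and $\delta(m_{i,j}P_j)\geq\delta(P_j)$ make explicit what the paper summarizes as ``after multiplying by a polynomial, the minimal fractional exponent of a series will not decrease.''
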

\begin{proof}
	Suppose that $a$ is a branch point of some $\omega\in W$. It implies that for such an element $\omega$, there is  a $K(x)$-embedding $\sigma$ of $A$ into the field of Puiseux series (around $a$) such that the series $\sigma(\omega)$ is ramified. Let $r=\min\{ \delta(\sigma(\omega))| \omega\in W\}$. Then $r\in \bQ\setminus\bZ$. Choose an element $\omega_i\in W$ such that $\delta(\sigma(\omega_i))=r$. Then
	$\sigma(\omega_i)$ must be ramified. After differentiating the series $\sigma(\omega_i)$, its minimal fractional exponent decreases strictly by~$1$. This means
	\[\delta(\sigma(\omega_i)^\prime)=\delta(\sigma(\omega_i))-1=r-1.\]
	
	Let $M=((m_{i,j}))_{i,j=1}^n\in K[x]^{n\times n}$. Since $\sigma$ is a differential homomorphism and a $K(x)$-embedding, we have \[\sigma(\omega_i)^\prime=\sigma(\omega_i^\prime)=
	\frac{1}{e}\sum_{j=1}^nm_{i,j}\sigma(\omega_j).\]  After multiplying by a polynomial, the minimal fractional exponent of a series will not decrease. So if $a$ is not a root of $e$, then \[\delta(\sigma(\omega_i)^\prime)\geq\min_{j=1}^n\delta(m_{i,j}\sigma(\omega_j))\geq\min_{j=1}^n\delta(\sigma(\omega_j))=r.\]
	This leads to a contradiction. Thus $a$ must be a root of $e$.
\end{proof}

{We now show that} the polynomial $e$ does not depend on the choice of the basis of $A$ but only on the $K[x]$-submodule it generates. Let $U$ and $V$ be two $K(x)$-vector space bases of $A$. Let $e_u, e_\nu\in K[x]$ and $M_u, M_\nu\in K[x]^{n\times n}$ be such that $e_uU^\prime=M_uU$ and $e_\nu V^\prime=M_\nu V$.
Suppose that $U$ and $V$ generate the same submodule of $A$ over $K[x]$. Then there exists a matrix $T\in K[x]$ such that $U=TV$ and $T$ is an invertible matrix over $ K[x]$. Taking derivatives, we get
\[U^\prime=\left(T^\prime T^{-1}+T\frac{1}{e_\nu}M_\nu T^{-1}\right)U=\frac{1}{e_u}M_uU.\]
Since $T,T^{-1}\in  K[x]^{n\times n }$,  we have $e_u$ divides $e_\nu$. Similarly the fact that $V=RU$ with $R=T^{-1}\in K[x]^{n\times n}$ implies that $e_\nu$ divides $e_u$. Thus $e_u=e_\nu$ when $e_u,e_\nu$ are monic.

\begin{lemma}\label{lem:e det(T)}
	Let $W$ be a suitable basis and $U$ be an integral basis of~$A$. Let $W=TU$ with $T\in K[x]^{n\times n}$. Let $e\in K[x]$ and $M\in K[x]^{n\times n}$ be such that $e W^\prime=M W$. If $a\in\bar K$ is a root of $\det(T)$, then $a$ is a root of~$e$. That means if $W$ is not a local integral basis at $a\in \bar K$, then $a$ is a root of~$e$.
\end{lemma}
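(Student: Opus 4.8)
The plan is to relate the roots of $\det(T)$ to branch points, and then invoke Lemma~\ref{lem:e2 branch}. Since $W$ is a suitable basis, all $\omega_i$ are integral, so $W$ is contained in the module generated by the integral basis $U$, giving $T\in K[x]^{n\times n}$; and conversely $U=T^{-1}W$. The key observation is that $\det(T)$ is (up to a unit in $K$) the index $[\,U : W\,]$ of $K[x]$-modules, and that a prime $a$ divides this index exactly when $W$ fails to be a local integral basis at $a$. So the statement "$a$ is a root of $\det(T)$" is equivalent to "$W$ is not a local integral basis at $a$"; this is why the second sentence of the lemma is just a restatement.

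First I would argue: if $a$ is a root of $\det(T)$, then reducing the relation $W=TU$ modulo the prime corresponding to $a$ (i.e.\ working over $K(x)_a$ and its residue field, or more concretely localizing at $a$) shows that the $K(x)_a$-span of $W$ is a proper submodule of the $K(x)_a$-span of $U$. Since $U$ is a (local) integral basis at $a$, this means some $K(x)_a$-combination of the $\omega_i$ — equivalently, after clearing the finitely many other denominators, some element of the form $\frac{1}{(x-a)}\sum_j c_j\omega_j$ with $c_j\in K(x)_a$ not all vanishing at $a$ — is integral at $a$ but does not lie in the $K(x)_a$-module generated by $W$. Call this integral element $\eta$; it satisfies $\val_a(\eta)\ge 0$ but is not a $K(x)_a$-combination of $W$.

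Next I would derive a contradiction with $W$ being a suitable basis if $a$ were not a root of $e$. Here is where Lemma~\ref{lem:e2 branch} enters: since $e$ is squarefree and $a$ is (by assumption, for contradiction) not a root of $e$, no $\omega_i$ has a branch point at $a$, so every Puiseux series $\sigma_i(\omega_j)$ at $a$ is an ordinary power series in $x-a$. The integral element $\eta$, being a linear combination over $K(x)_a$ of the $\omega_j$ with at most a simple pole coefficient $\frac{1}{x-a}$, can only be integral at $a$ if that forced pole is actually cancelled — but with unramified power series and a genuine denominator $x-a$ that does not divide all the $c_j$, one shows the pole cannot cancel in all $n$ embeddings simultaneously (this is the standard argument that over an unramified place an integral basis is just a power-series basis, and any integral element is already a polynomial-in-$x$ combination of it near $a$). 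Hence $\eta$ would have to already be a $K(x)_a$-combination of $W$, a contradiction. Therefore $a$ must be a root of $e$.

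The main obstacle is the middle step: making precise the passage from "$a \mid \det(T)$" to the existence of a witnessing integral element $\eta\notin K(x)_a\text{-span}(W)$, and then showing that such an $\eta$ forces a branch point at $a$ among the $\omega_i$ when $e(a)\ne 0$. The cleanest route is probably to avoid branch points in the first half entirely and instead argue directly: localize at $a$, note $U$ generates the full local integral module while $W$ generates a proper submodule (since $\det(T)$ vanishes at $a$), pick $\eta$ in the quotient, and then use that $e(a)\ne 0$ together with $e W' = M W$ to show $\eta' $ — or rather the failure of $W$ to be closed under $\frac{1}{x-a}$-combinations — is incompatible with $\val_a$-integrality via the unramified power-series expansions, exactly as in Lemma~\ref{lem:e2 branch}'s contrapositive applied to the enlarged basis $(W\cup\{\eta\})$. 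I expect the argument to close by an appeal to the already-proved fact that $e$ depends only on the generated module: enlarging $W$ by the integral element $\eta$ can only shrink (divide) the associated $e$, but it cannot remove the need for $a$ unless $a\nmid e$ to begin with — contradicting that the enlarged module is strictly larger and eventually reaches the integral basis $U$, for which $a\mid e_U$ would be needed precisely when $a$ is a branch point. Pinning down which of these equivalent formulations gives the shortest self-contained proof is the real work.
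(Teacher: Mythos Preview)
Your route is different from the paper's, and it has a real gap at the step you call ``standard.'' From $e(a)\neq 0$ and Lemma~\ref{lem:e2 branch} you correctly deduce that every $\sigma_k(\omega_j)$ is an unramified power series at~$a$, and then assert that for $\eta=\frac{1}{x-a}\sum_j c_j\omega_j$ with $c_j\in K(x)_a$ not all vanishing at~$a$, the pole cannot cancel in all $n$ embeddings. But cancellation is governed by the evaluation matrix $\Omega(a)$ with entries $\sigma_k(\omega_j)(a)$: the pole cancels exactly when the vector $(c_1(a),\dots,c_n(a))$ lies in its kernel, and unramifiedness alone does \emph{not} force $\Omega(a)$ to be invertible. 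For $m=y^2-x$, $W=(x-1,\,y)$, $a=1$, all series are unramified power series at~$1$, yet $\det\Omega=-2(x-1)\sqrt{x}$ vanishes there; indeed $1=\frac{1}{x-1}\omega_1$ is integral and exhibits the cancellation. (Here $e=2x(x-1)$ also vanishes at~$1$, so the lemma is not contradicted --- the point is that your stated reason is insufficient.) The ``standard unramified-place'' fact you cite applies to a local integral basis, which is precisely what you are trying to establish.

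What really makes $\Omega(a)$ invertible is the full hypothesis $e(a)\neq 0$, used through a Wronskian step you never perform: applying each $\sigma_k$ to $eW'=MW$ gives $\Omega'=\frac{1}{e}\Omega M^{T}$, hence $(\det\Omega)'/\det\Omega=\Tr(M)/e$, which is regular at~$a$; since $\det\Omega$ is a nonzero power series there, it cannot vanish at~$a$. With this added, your argument closes. Your fallback in the last paragraph (``enlarging the module can only shrink~$e$'') is circular: that monotonicity is derived in the paper \emph{from} the present lemma. The paper takes a different path altogether: it puts $T$ into Smith normal form $T=\diag(r_1,\dots,r_n)$, reads off that the $i$th diagonal entry of $\frac{1}{e}M$ equals $\frac{r_i'}{r_i}+\frac{a_{i,i}}{e_u}$, and splits into two cases --- if $a\nmid e_u$ the term $r_i'/r_i$ forces a pole and hence $a\mid e$, while if $a\mid e_u$ one combines Lemma~\ref{lem:e branch} (applied to~$U$) with Lemma~\ref{lem:e2 branch} (applied to~$W$).
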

\begin{proof}
	First we shall make a change of bases. Consider the Smith normal form of the matrix $T$. This means there are two matrices $P,Q\in  K[x]^{n \times n}$ such that both $P$ and $Q$ are invertible over $K[x]$ and $PTQ=\Lambda$ for some diagonal matrix $\Lambda\in K[x]^{n\times n}$. Then $W=P^{-1}\Lambda Q^{-1}U$ and hence $PW=\Lambda (Q^{-1}U)$. Replacing $PW$ and $Q^{-1}U$ by $W$ and $U$, respectively, we may assume
	\[W=TU\]
	with $T=\diag(r_1,\ldots,r_n)\in K[x]^{n\times n}$. This operation does not change the module generated by $W$ or $U$, respectively. Note that $\det(PTQ)$ and $\det(T)$ are equal up to a unit in $K$. It suffices to prove the result for such special bases $W$ and $U$.
	
	Let $e_u\in K[x]$ and $M_u=((a_{i,j}))_{i,j=1}^n\in K[x]^{n\times n}$ be such that $e_u  U^\prime=M_u U$. Differentiating both sides of  $W=TU$ yields that
	\begin{equation}\label{eq:d(W=TU)}
	W^\prime=\left(T^\prime T^{-1}+T\frac{1}{e_u}M_uT^{-1}\right)W=\frac{1}{e}MW.
	\end{equation}
	Substituting $T=\diag(r_1,\ldots,r_n)$ and $M_u=((a_{i,j}))_{i,j=1}^n$, we get the the $i$-th
	diagonal entry of $\frac{1}{e}M$ is $\frac{r_i^\prime}{r_i} + \frac{a_{i,i}}{e_u}$.
	
	Let $a\in \bar K$ be a root of $\det(T)$. Since $\det(T)=r_1r_2\cdots r_n$, there is $i\in\{1,2,\ldots,n\}$ such that $a$ is a root of $r_i$. If $a$ is not a root of $e_u$, then $a$ must be a pole of the entry $\frac{r_i^\prime}{r_i}+\frac{a_{i,i}}{e_u}$. Since $e$ is a common multiple of the denominator of all the entries of $\frac{1}{e}M$, we have $a$ is a root of $e$. Now suppose that $a$ is a root of $e_u$.  Since $U$ is an integral basis, Lemma~\ref{lem:e branch} implies that there is $u\in U$ such that $a$ is a branch point of $u$. Then one of the series associated to $u$ is ramified. In other words, such a series has at least one fractional exponent. Since $W=TU$ for some invertible matrix $T$ in $K(x)^{ n\times n}$,  there is $\omega\in W$ such that one of the series associated to $\omega$ is also ramified. Therefore, $a$ is a branch point of $\omega$. By Lemma~\ref{lem:e2 branch}, $a$ is root of $e$.
\end{proof}

Since the inverse of the matrix $T$ is $T^{-1}=\frac{1}{\det(T)}T^*$, where
$T^*$ is the adjoint matrix of $T$, the least common multiple of the denominator
of the entries of $T^{-1}$ is bounded by $\det(T)$. By investigating
Equation~\eqref{eq:d(W=TU)}, we see a possible root of $e$ must come from
$\det(T)$ and $e_u$. Combining the last paragraph in the argument of Lemma~\ref{lem:e det(T)}, we get those roots of $\det(T)$ and $e_u$ are exactly roots of $e$. In
particular, when $W$ is a suitable basis, the corresponding polynomial $e$ is
the squarefree part of the product $\det(T)e_u$. Therefore, if the submodule
generated by a suitable basis is larger, then $e$ is smaller.

\section{Polynomial Reduction} \label{SECT:polyred}

Polynomial reduction is a postprocessing step for Hermite reduction which was first introduced for
hyperexponential terms~\cite{bostan13a} and has later been formulated for algebraic and fuchsian D-finite
functions~\cite{chen16a,chen17a}.
For the latter cases, like for Trager's criterion, integral bases that are normal at infinity are
employed. Our goal in this section is to relax this requirement to suitable bases, so as to obtain
a version of polynomial reduction which can serve as a natural continuation of the lazy Hermite
reduction process and provides the feature that the final remainder is zero if and only if the
integrand is integrable.

Let $h\in A$ be the remainder of lazy Hermite reduction (see Section~\ref{SECT:lazy}) with respect to a suitable basis $W=(\omega_1,\dots,\omega_n)$. Then $h$ can be written in the form
\begin{equation}\label{eq:lazy remainder}
h=\sum_{i=1}^n\frac{h_i}{de}\omega_i
\end{equation}
with $h_i,d\in K[x]$ such that $\gcd(d,e)=\gcd(h_1,\ldots,h_n,d)=1$ and $d$ is squarefree. If $h$ is integrable in $A$, we shall prove that $d$ is a constant in~$K$.
When $W$ is an integral basis, this result was proved in~\cite[Lemma 9]{chen16a}. The following lemma is a local version.
\begin{lemma}\label{lem:9 local}
	Let $h\in A$ be in the form~\eqref{eq:lazy remainder}. If $h$ is integrable in $A$ and $W$ is a local integral basis at $a\in \bar K$, then $d$ has no root at $a$.
\end{lemma}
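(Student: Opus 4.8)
The plan is to argue by contradiction at a single finite place $a\in\bar K$ that is a root of $d$. Suppose $d(a)=0$. Since $\gcd(d,e)=1$, the point $a$ is not a root of $e$, so by Lemma~\ref{lem:e det(T)} (or directly, since $W$ is assumed a local integral basis at $a$ here) the basis $W$ behaves well at $a$: each $\omega_i$ is locally integral at $a$, and $W$ is a $K(x)_a$-basis of the module of locally integral elements at $a$. The idea is to examine the behaviour of $h$ and of a hypothetical antiderivative $g$ with $h=g'$ at the place $a$, and to show that the simple pole forced by $d$ in $h$ cannot be cancelled.

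First I would write $d = (x-a)\tilde d$ with $\tilde d(a)\neq 0$ and localize: multiplying through, $h$ has value function $\val_a(h) = -1$, because $\gcd(h_1,\dots,h_n,d)=1$ means some $h_i$ is nonzero at $a$, the $\omega_i$ are locally integral with $W$ a local integral basis (so their ``leading parts'' at $a$ are $K(x)_a$-linearly independent in the appropriate sense), $e(a)\neq 0$, and $(x-a)$ contributes exactly a pole of order one. Concretely, pick an embedding $\sigma$ into a Puiseux field at $a$ realizing $\nu_a(\sigma(h)) = \val_a(h)$; I claim this value is exactly $-1$. Then suppose $h = g'$ for some $g\in A$. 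Applying $\sigma$, we get $\sigma(h) = \sigma(g)'$. Now I would use the standard fact that differentiation of a Puiseux series at a finite point $a$ \emph{raises} valuation by at most... — more precisely, a Puiseux series whose only possible problem is a pole, upon differentiation, either stays a power series (if it was one) or, if it has a genuine pole or fractional part, its derivative has valuation one less than before \emph{unless} the offending term is the logarithmic term $c(x-a)^{-1}$ — but $(x-a)^{-1}$ is not the derivative of any Puiseux series (its formal antiderivative involves $\log$). This is the crux: a Puiseux series $\sum c_i(x-a)^{r_i}$ is a derivative $P'$ of another Puiseux series if and only if it has no term with exponent exactly $-1$, and in that case $\nu_a(P) = \nu_a(P') + 1$. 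So if $\sigma(h) = \sigma(g)'$ and $\val_a(h) = -1$, the only way out is that the exponent $-1$ does not actually occur in $\sigma(h)$ — but a value of $-1$ with no $(x-a)^{-1}$-term would require a genuine fractional exponent in $(-1,0)$, i.e.\ $a$ is a branch point. I then need to rule that out, or handle it.

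To handle the branch-point case cleanly I would work not with a single $\sigma$ but symmetrically with all $n$ embeddings, or equivalently use the residue/trace pairing. The cleanest route: the residue of $h\,dx$ at each place above $a$ must vanish if $h$ is integrable (a derivative has zero residues everywhere), and moreover one shows the full ``polar part'' of $h$ at $a$ in the local integral basis is obstructed. Since $W$ is a local integral basis at $a$ and $e(a)\neq 0$, we have $\omega_i' \in \frac1e MW$ with $e(a)\neq 0$, hence each $\omega_i'$ is again locally integral at $a$; thus for $g = \sum g_i\omega_i$ with $g_i\in K(x)$, the possible poles of $g'$ at $a$ come only from poles of the $g_i$ at $a$, and if $g_i$ has a pole of order $k\geq 1$ then $g_i'$ has a pole of order $k+1$; a first-order pole in $g'$ would need a $\log(x-a)$-type term and is impossible, exactly as in the rational Hermite reduction argument. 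Since $h = g'$ is locally integral away from the simple pole, writing everything over the local integral basis and comparing denominators, the denominator $d$ of $h$ — squarefree, coprime to $e$ — would have to come from the $g_i$, but then $g'$ would have poles of order $\geq 2$, contradicting that $h$'s denominator $d$ is squarefree unless $d$ is a unit at $a$, i.e.\ $d(a)\neq 0$.

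The main obstacle I anticipate is the branch-point case: when $a$ is a ramified place, the naive ``derivative raises valuation by one'' statement needs the refinement that the exponent $-1$ (an integer) never appears in a derivative, while fractional exponents in $(-1,0)$ \emph{can} appear — so I must show that a genuinely fractional polar term of $h$ at $a$ is also obstructed. This is handled by noting that if $\sigma(h)$ had minimal exponent $r\in(-1,0)\setminus\set Z$, then $\sigma(g)$ would have a term of exponent $r+1\in(0,1)$, which is allowed, so there's \emph{no} contradiction from a single embedding — hence one genuinely needs to use that $h$ is given in the \emph{integral basis} form \eqref{eq:lazy remainder} with \emph{polynomial} $d$ in the denominator: a fractional pole would force $e$ (the branch locus, by Lemma~\ref{lem:e2 branch} and Lemma~\ref{lem:e branch}) rather than $d$ to carry that place, contradicting $\gcd(d,e)=1$. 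So the real content is: the polynomial $d$ can only produce \emph{integer-exponent} (unramified-type) poles at its roots, and those are exactly the ones obstructed by the $\log$ argument. I would make this precise and that finishes the proof.
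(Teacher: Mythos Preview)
Your proposal is correct and, once the dust settles, lands on essentially the same argument as the paper --- but you take a long detour to get there. The paper's proof is three lines: write the antiderivative as $H=\sum_i b_i\omega_i$; if some $b_i$ had a pole at~$a$, then (because $W$ is a local integral basis at~$a$) $H$ itself would satisfy $\val_a(H)<0$, hence $\val_a(h)=\val_a(H')<-1$; but the form~\eqref{eq:lazy remainder} together with $d,e$ squarefree and $\gcd(d,e)=1$ forces $\val_a(h)\geq -1$, contradiction. Your second-to-last paragraph is exactly this, phrased in terms of pole orders of the coefficients $g_i$.

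The ``main obstacle'' you anticipate in your final paragraph is not an obstacle at all, and you already have the tool to see why: you assume $d(a)=0$, so $e(a)\neq 0$ by $\gcd(d,e)=1$, and then Lemma~\ref{lem:e2 branch} (contrapositive) says $a$ is not a branch point of any~$\omega_i$. Hence every Puiseux expansion at~$a$ of every element of~$A$ is an ordinary Laurent series, all valuations are integers, and there are no fractional exponents to worry about. This collapses your first two paragraphs into the direct statement $\val_a(h)=-1$, and your residue/$\log$-obstruction argument then finishes in one line --- a valid alternative to the paper's valuation argument, but not actually different in substance. The material on traces, the residue pairing, and the case split on ramification can all be dropped.
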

\begin{proof}
	Suppose $h$ is integrable in $A$. Then there exists $H=\sum_{i=1}^nb_i \omega_i\in A$ with $b_i\in K(x)$ such that $h=H^\prime$. It suffices to show that every $b_i$ has no pole at~$a$. Otherwise $H$ has a pole at~$a$, because $W$ is a local integral basis at~$a$. Then $h$ has at least a double pole at~$a$. This contradicts the fact that $d,e$ are squarefree. Thus $d$ has no root at~$a$.
\end{proof}

\begin{theorem}\label{thm:lemma 9}
	Let $h\in A$ be in the form~\eqref{eq:lazy remainder}. If $h$ is integrable in $A$, then $d$ is in~$K$.
\end{theorem}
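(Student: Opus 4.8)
The plan is to reduce the global statement to the local statement already established in Lemma~\ref{lem:9 local}, by showing that every root of the squarefree polynomial $d$ forces $W$ to be a local integral basis at that root. Suppose $a\in\bar K$ is a root of $d$. Since $\gcd(d,e)=1$, the point $a$ is not a root of $e$. Now I invoke Lemma~\ref{lem:e det(T)}: if $W$ were not a local integral basis at $a$, writing $W=TU$ for an integral basis $U$, then $a$ would be a root of $\det(T)$ and hence of $e$ — contradiction. Therefore $W$ must be a local integral basis at every root of $d$.

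With that observation in hand, the rest is immediate. Assume $h$ is integrable in $A$. If $d$ is not in $K$, it has some root $a\in\bar K$, and by the previous paragraph $W$ is a local integral basis at $a$. But then Lemma~\ref{lem:9 local} says $d$ has no root at $a$, a contradiction. Hence $d\in K$.

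The main point to be careful about is the hypothesis bookkeeping for the two lemmas being cited. Lemma~\ref{lem:e det(T)} requires $W$ to be a suitable basis (not merely a basis of integral elements), which is exactly the standing assumption here since $h$ is the output of lazy Hermite reduction with respect to a suitable basis; the conclusion there is literally stated as ``if $W$ is not a local integral basis at $a$, then $a$ is a root of $e$,'' so the contrapositive is what we need. Lemma~\ref{lem:9 local} needs $h$ to be in the form~\eqref{eq:lazy remainder} and $W$ a local integral basis at the point in question — both now available. I do not anticipate a genuine obstacle; the only thing worth spelling out explicitly is the use of $\gcd(d,e)=1$ (part of the normalization in~\eqref{eq:lazy remainder}) to pass from ``$a$ is a root of $e$'' being false to deriving the contradiction, so that the chain ``$a$ root of $d$ $\Rightarrow$ $a$ not a root of $e$ $\Rightarrow$ ($W$ local integral basis at $a$) $\Rightarrow$ $a$ not a root of $d$'' closes.

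\begin{proof}
Suppose $h$ is integrable in $A$, and assume for contradiction that $d\notin K$. Then $d$ has a root $a\in\bar K$. Since $\gcd(d,e)=1$, the point $a$ is not a root of $e$. Let $U$ be an integral basis of $A$ and write $W=TU$ with $T\in K[x]^{n\times n}$. If $W$ were not a local integral basis at $a$, then $a$ would be a root of $\det(T)$, and Lemma~\ref{lem:e det(T)} would imply that $a$ is a root of $e$, a contradiction. Hence $W$ is a local integral basis at $a$. But then Lemma~\ref{lem:9 local} shows that $d$ has no root at $a$, contradicting the choice of $a$. Therefore $d\in K$.
\end{proof}
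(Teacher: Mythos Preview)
Your proof is correct and follows essentially the same approach as the paper: both arguments combine Lemma~\ref{lem:9 local} with Lemma~\ref{lem:e det(T)} via $\gcd(d,e)=1$, the only cosmetic difference being that the paper does a direct case split on whether $W$ is a local integral basis at $a$, whereas you argue by contradiction using the contrapositive.
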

\begin{proof}
	Suppose that $h$ is integrable in~$A$. In order to show $d$ is a constant, we show that for any $a\in \bar K$, $a$ is not a root of~$d$. If $W$ is a local integral basis at~$a$, then the conclusion follows from Lemma~\ref{lem:9 local}. Now we assume that $W$ is not a local integral basis at~$a$. By Lemma~\ref{lem:e det(T)}, we know that $a$ is a root of~$e$. Since $\gcd(d,e)=1$, it follows that $a$ is not a root of~$d$.
\end{proof}

{To further reduce the lazy Hermite remainder, we give an upper bound for the denominator of its integral if $h$ is integrable in $A$. This bound does not depend on the integrand $h$, but only depends on the discriminant of a suitable basis in its representation.}

Recall that the {\em discriminant} of a tuple $W=(\omega_1,\ldots,\omega_n)$ of elements of $A$ is defined by the determinant
\[\disc(W)=\det(\Tr(\omega_i\omega_j)),\]
where $\Tr$ is the trace map from $A$ to $K(x)$. If the $\omega_i$'s are integral functions, then their traces are polynomials, and thus {the discriminant} is a polynomial. If $W=TU$ where $T\in K[x]^{n\times n}$ is the change of basis matrix, then $\disc(W)=\disc(U)\det(T)^2$.
\begin{lemma}\label{lem:bound}
	Let $h\in A$ be in the form~\eqref{eq:lazy remainder}. Let $U$ be an integral basis of $A$ and let $T\in K[x]^{n\times n}$ be such that $W=TU$. If $h$ is integrable in $A$, i.e., there exists $u\in K[x]$ and $q=(q_1,\ldots,q_n)\in K[x]^n$ such that
	\[h=\left(\sum_{i=1}^n\frac{q_i}{u}\omega_i\right)^\prime\]
	and $\gcd(q_1,\ldots,q_n,u)=1$,
	then $u$ divides $\det(T)$. Hence $u^2$ divides $\disc(W)/\disc(U)$.
\end{lemma}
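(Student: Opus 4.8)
The plan is to localize the statement at an arbitrary finite place $a\in\bar K$ and show that the multiplicity of $a$ as a root of $u$ is at most its multiplicity in $\det(T)$; since this holds at every $a$, the divisibility $u\mid\det(T)$ follows. First I would dispose of the easy case: if $W$ is a local integral basis at $a$, then Lemma~\ref{lem:9 local} already tells us $d$ has no root at $a$, but here we need more, namely that the integral $H=\sum q_i\omega_i/u$ has no pole at $a$. For this, recall that the value functions $\val_a$ are valuations, so $\val_a(H')\geq \val_a(H)-1$ coordinatewise on Puiseux series; since $h=H'$ has at worst a simple pole at $a$ (its denominator $de$ is squarefree and, when $W$ is integral at $a$, $\val_a(h)\geq -1$), and since $W$ being a local integral basis at $a$ means $\val_a(H)<0$ forces some coefficient $q_i/u$ to have a pole at $a$, the standard argument from Lemma~\ref{lem:9 local} shows $\val_a(H)\geq 0$; hence $u$ has no root at $a$, and certainly its multiplicity there is $\leq$ that in $\det(T)$ (which may be zero).

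The substantive case is when $W$ is \emph{not} a local integral basis at $a$. Following the normalization in the proof of Lemma~\ref{lem:e det(T)}, apply Smith normal form to reduce to $W=TU$ with $T=\diag(r_1,\dots,r_n)$; this changes neither $\det(T)$ (up to a unit) nor the $K[x]$-modules involved, so it suffices to treat diagonal $T$. Write $H=\sum_{i=1}^n \frac{q_i}{u}\omega_i = \sum_{i=1}^n \frac{q_i r_i}{u} u_i$ in terms of the integral basis $U=(u_1,\dots,u_n)$. Because $U$ is an integral basis, the coefficients of $h=H'$ in the basis $U$ have denominator dividing $e_u\,(\text{squarefree part})$; more precisely, after Hermite reduction against an integral basis the integrable remainder has a \emph{squarefree} denominator, so writing $H'=\sum_i (p_i/w) u_i$ we get $w$ squarefree. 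On the other hand, differentiating $H=\sum (q_ir_i/u)u_i$ and using $e_u U'=M_uU$ gives
\[
H' = \sum_{i=1}^n\left(\frac{q_ir_i}{u}\right)^{\!\prime} u_i \;+\; \sum_{i=1}^n \frac{q_ir_i}{u}\cdot\frac{1}{e_u}\sum_{j=1}^n (M_u)_{ij}\,u_j .
\]
The denominator appearing here is (a divisor of) $u^2 e_u$ from the first sum's quotient-rule term $-q_ir_iu'/u^2$ together with $r_i'/u$ contributions, and $u\,e_u$ from the second. Comparing with the fact that $H'$ has squarefree denominator $w$ forces cancellation: every factor of $u$ at a place $a$ that is \emph{not} a root of $e_u$ must already divide the numerators $q_ir_i$ for all $i$, hence (since $\gcd(q_1,\dots,q_n,u)=1$) must divide all $r_i$, i.e. divide $\det(T)$. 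At a place $a$ that \emph{is} a root of $e_u$: since $e_u$ is the squarefree part of the corresponding data and, by Lemma~\ref{lem:e branch}, $a$ is a branch point of some $u_i\in U$, the quotient-rule term $-q_ir_iu'/u^2$ must be killed modulo the squarefree denominator; the only way a double pole at $a$ cancels is again that $u$'s multiplicity at $a$ is absorbed by $r_i$, giving $u\mid\det(T)$ locally at $a$ as well. The hard part will be making this cancellation bookkeeping fully rigorous --- one has to rule out that a pole of order $\geq 2$ in $1/u^2$ is cancelled by a \emph{pole} contributed elsewhere rather than by divisibility of numerators; this is handled exactly as in the proof of \cite[Lemma~9]{chen16a}, using that $de$ and the integral-basis remainder denominator are squarefree so no such pole exists to cancel against.

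Finally, once $u\mid\det(T)$ is established, the last assertion is immediate from the discriminant transformation formula recalled just before the lemma: $\disc(W)=\disc(U)\det(T)^2$, so $\disc(W)/\disc(U)=\det(T)^2$, and $u\mid\det(T)$ gives $u^2\mid\det(T)^2=\disc(W)/\disc(U)$. I would present the argument by first stating the local claim ``$\val_a(u)\leq\val_a(\det T)$ for all $a\in\bar K$'', proving it in the two cases above, and then concluding globally.
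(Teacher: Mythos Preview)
Your Case~1 is essentially correct and is exactly the mechanism of Lemma~\ref{lem:9 local}, but Case~2 is an unnecessary detour that you yourself do not make rigorous. The paper avoids the case split entirely by one observation you overlooked: run the Case~1 argument in the basis~$U$, not in~$W$. Since $W=TU$ with $T\in K[x]^{n\times n}$, the element $h$ also has a squarefree denominator when written in the integral basis~$U$; and since $U$ is a local integral basis at \emph{every} $a\in\bar K$, the reasoning you used in Case~1 (i.e.\ Lemma~\ref{lem:9 local}) applies at every finite place at once and yields directly $h=\bigl(\sum_i a_i u_i\bigr)'$ with $a_i\in K[x]$. Now write $U=\frac1r RW$ with $r\in K[x]$ and $R\in K[x]^{n\times n}$; converting $\sum_i a_i u_i$ back to the $W$-basis and comparing the two antiderivatives of $h$ (which differ by a constant) gives $u\mid r$. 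Finally $\frac1r R=T^{-1}=\frac1{\det T}T^*$ forces $r\mid\det T$. No Smith normal form, no pole-order bookkeeping.

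Two concrete problems in your Case~2 as written: the assertion that a factor of $u$ at~$a$ ``must already divide the numerators $q_ir_i$ for all~$i$'' is exactly the point that needs proof, and you have not excluded that a higher-order pole from $(q_ir_i/u)'$ cancels against the $M_u$-term; and the subsequent inference ``hence must divide all~$r_i$'' is too strong---from $p^k\mid q_ir_i$ for all~$i$ together with $\gcd(q_1,\dots,q_n,u)=1$ you only get $p^k\mid r_i$ for \emph{some}~$i$ (one with $p\nmid q_i$), which still suffices for $p^k\mid\det T=\prod_j r_j$ but is not what you claimed. Your final paragraph on $u^2\mid\disc(W)/\disc(U)$ is correct and matches the paper.
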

\begin{proof}
	{Since $h$~has a squarefree denominator with respect to~$W$ and $W=TU$ with $T\in K[x]^{n\times n}$, it follows that $h$~has a squarefree denominator with respect to $U$. This means $h$ is also a remainder with respect to the integral basis~$U$.} Assume that $h$ is integrable in~$A$.  Then Lemma~\ref{lem:9 local} implies that { $h=(\sum_{i=1}^n a_i u_i)^\prime$ with $a_i\in K[x]$.} Write $U=\frac{1}{r}RW$ with $r\in K[x]$, $R=((b_{i,j}))_{i,j=1}^n\in K[x]^{n\times n}$ and $\gcd(r,b_{1,1},b_{1,2},\ldots,b_{n,n})=1$. Then
	\[\sum_{i=1}^na_iu_i=\sum_{i=1}^na_i\frac{1}{r}\sum_{j=1}^nb_{i,j}\omega_j=\sum_{j=1}^n\frac{1}{r}\left(\sum_{i=1}^na_ib_{i,j}\right)\omega_j.\]
	Thus $u$ divides $r$ because $a_i,b_{i,j}\in K[x]$ and two antiderivatives of $h$ only differ by a constant.
	
	Since $W=TU$, we have $\frac{1}{r}R=T^{-1}=\frac{1}{\det(T)}T^*$. So $r$ divides $\det(T)$ and hence $u$ also divides $\det(T)$. {Moreover, $\disc(W)=\disc(U)\det(T)^2$.} {Since $u_i$'s and $\omega_i$'s are integral elements, the discriminants }$ \disc(W), \disc(U)$ are polynomials in $K[x]$. Therefore $\det(T)^2$ divides {the polynomial} $\disc(W)/\disc(U)$, so does $u^2$.
\end{proof}
{If we already know that $W$ is an integral basis, then the quotient $\disc(W)/\disc(U)$ is a unit in $K$. By Lemma~\ref{lem:bound}, we see $u$ is a constant. If no integral basis is available, then from the condition $u^2$ divides $\disc(W)$, an upper bound of $u$ can be chosen as the product over
	$p^{\lfloor r/2\rfloor}$ where $p$ runs through the irreducible factors of
	$\disc(W)$ and $r$ is the multiplicity of $p$ in $\disc(W)$.
	\begin{example}
		Let $m=y^2-x$ and $h=\frac{y}{x}$.
		\begin{enumerate}
			\item For $W=(1,y)$, we have $\disc(U)=4x$. So we can choose $u=1$. In fact, $W$ is an integral basis and $h=(2y)^\prime$.
			\item For $W=(1,(x^2+1)y)$, we have $\disc(W)=4(x^2+1)^2x$. So we can choose $x^2+1$ as an upper bound of $u$. In fact, $h=(\frac{1}{x^2+1}2(x^2+1)y)^\prime$.
		\end{enumerate}
	\end{example}
	If $h$ is integrable in $A$, we shall reduce $h$ to zero, otherwise we hope to remove all possible integrals whose denominators are bounded by $u$. Before that we write $h$ as two parts with denominators $d$ and $e$, respectively. By} the extended Euclidean algorithm, there are { polynomials $r_i,s_i\in K[x]$} such that $h_i=r_id+s_ie$ and $\deg_x(r_i)<\deg_x(d)$. Then the lazy Hermite reduction remainder $h$ decomposes as
\begin{equation}\label{eq:euclidean}
h=\sum_{i=1}^n\frac{h_i}{de}\omega_i=\sum_{i=1}^n\frac{r_i}{d}\omega_i+\sum_{i=1}^n\frac{s_i}{e}\omega_i.
\end{equation}

{Our second goal is to confine the $s_i$'s to a finite-dimensional vector space over~$K$. This is a generalization of the {\em polynomial reduction} in~\cite{chen16a}. In this process, we shall rewrite the second term of the remainder $h$ in~\eqref{eq:euclidean} with respect to another basis. This new basis is used to perform the polynomial reduction and obtain the following additive decomposition.
	\begin{defi}
		Let $f$ be an element in $A$. Let $W$ and $V$ be two $K(x)$-vector space bases of $A$. Let $e,a\in K[x]$ and $M,B\in K[x]^{n\times n}$ be such that $eW^\prime=MW$ and $aV^\prime=BV$. Suppose that $f$ can be decomposed into
		\begin{equation}\label{eq:add decom}
		f=g^\prime+\frac{1}{d}PW+\frac{1}{a}QV,
		\end{equation}
		where $g\in A$, $d\in K[x]$ is squarefree and $\gcd(d,e)=1$, $P,Q\in K[x]^n$ with $\deg_x(P)<\deg_x(d)$ and $Q\in N_V$, which is a finite-dimensional $K$-vector space. The decomposition in ~\eqref{eq:add decom} is called an {\em additive decomposition} of $f$ with respect to $x$ if it satisfies the condition that $P,Q$ are zero if and only if $f$ is integrable in $A$.
	\end{defi}
	Given an algebraic function, its additive decomposition always exists for some integral basis $W$ and some basis $V$ which is a local integral basis at infinity, see~\cite[Theorem 14]{chen16a}. We shall show below that we can always find an additive decomposition with respect to certain suitable bases.}

Let $V$ be a $K(x)$-vector space basis of $A$, and let $a\in K[x]$ and $B=((b_{i,j}))_{i,j=1}^n\in K[x]^{n\times n}$ be such that $aV^\prime=BV$.
We do not require that $\gcd(a,b_{1,1},b_{1,2},\ldots,b_{n,n})=1$. Let $u\in K[x]$ and $p\in K[x]^{n}$. A direct calculation yields that
\begin{align}
\left(\frac{p}{u}V\right)^\prime&=\left(\frac{p}{u}\right)^\prime V+\frac{p}{u}V^\prime=\frac{aup^\prime-au^\prime p+upB}{u^2a}V.
\end{align}
This motivates the following definition.
\begin{defi}
	For a given polynomial $u\in K[x]$, let the map $\phi_V:K[x]^n\to u^{-2}K[x]^n$ be defined by \[\phi_V(p)=\frac{1}{u^2}(aup^\prime-au^\prime p+upB)\] for any $p\in K[x]^n$. We call $\phi_V$ the map for {\em polynomial reduction} with respect to $u$ and $V$, and call the subspace \[\im(\phi_V)=\{\phi_V(p)\mid p\in K[x]^n\}\subseteq u^{-2}K[x]^n\] the {\em subspace for polynomial reduction} with respect to $u$ and~$V$.
\end{defi}
Note that, by construction, if $q=\phi_V(p)$, then  $\frac{q}{a}V=(\frac{p}{u}V)^\prime$. So $\frac{q}{a}V$ is integrable.

We can always view an element of $K[x]^n$ (resp. $K[x]^{n\times n}$) as a polynomial in $x$ with coefficient in $K^n$ (resp. $K^{n\times n}$). In this sense, we use the notation $\lt(\cdot)$ for the leading term of a vector (resp. matrix). For example, if $p\in K[x]^n$ is of the form
\[p=p^{(r)}x^r+\cdots+p^{(1)}x+p^{(0)},\quad p^{(i)}\in K^n,\]
where $p^{(r)}\neq0$, then $\deg_x(p)=r$, $\lt(p)=p^{(r)}x^r$. Let $\{e_1,\ldots,e_n\}$ be the standard basis of $K^n$. Then the $K[x]$-module $K[x]^n$ viewed as $K$-vector space is generated by
\[\S:=\{e_ix^j \mid 1\leq i\leq n, j\in \bN\}.\]

\begin{defi}
	Let $N_V$ be the $K$-subspace of $K[x]^n$ generated by
	\[\{t\in \S\mid t\neq \lt(p) \text{ for all }p\in \im(\phi_V)\cap K[x]^n\}\]
	Then $K[x]^n=(\im(\phi_V)\cap K[x]^n)\oplus N_V$. We call $N_V$ the {\em standard complement} of $\im(\phi_V)$. For any $p\in K[x]^n$, there exists $p_1\in K[x]^n$ and $p_2\in N_V$ such that
	\[\frac{p}{a}V=\left(\frac{p_1}{u}V\right)^\prime + \frac{p_2}{a}V.\]
	This decomposition is called the {\em polynomial reduction of $p$} with respect to $u$ and $V$.
\end{defi}

{If $u=1$ , then the polynomial reduction with respect to $u$ falls back to the situation discussed in~\cite{chen16a}.  }
\begin{prop}
	Let $a\in K[x]$ and $B\in K[x]^{n\times n}$ be such that $aV^\prime=BV$, as before. If $\deg_x(B)\leq\deg_x(a)-1$, then $N_V$ is a finite dimensional $K$-vector space.
\end{prop}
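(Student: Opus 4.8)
The plan is to produce a large, easily controlled $K$-subspace of $\im(\phi_V)\cap K[x]^n$ whose set of leading terms is already cofinite in~$\S$; since enlarging the set of attainable leading terms can only shrink the standard complement, this forces $N_V$ to be finite-dimensional. The starting observation is that on the special inputs $p=uw$ with $w\in K[x]^n$ the denominators in $\phi_V$ cancel: a direct computation gives $\phi_V(uw)=aw'+wB$, which always lies in $K[x]^n$. Writing $\psi(w):=aw'+wB$ (a $K$-linear map $K[x]^n\to K[x]^n$), we thus have $\{\psi(w)\mid w\in K[x]^n\}\subseteq\im(\phi_V)\cap K[x]^n$, and it suffices to show that every standard monomial $e_kx^\ell$ of sufficiently large degree $\ell$ arises as $\lt(\psi(w))$ for some~$w$.

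The next step is to read off the top-degree behaviour of $\psi$ on the monomials $e_ix^j$. Put $m=\deg_x(a)$, let $a_m=\lc(a)\neq0$, and let $B_{m-1}\in K^{n\times n}$ be the coefficient of $x^{m-1}$ in~$B$ (possibly zero). Since $(e_ix^j)'=je_ix^{j-1}$ and, crucially, $\deg_x(B)\le m-1$, the vector $\psi(e_ix^j)=jae_ix^{j-1}+x^j(e_iB)$ has degree at most $m+j-1$, with coefficient of $x^{m+j-1}$ equal to $e_i(ja_mI+B_{m-1})$, viewing $e_i$ as a row vector. Hence for $\lambda\in K^n$ the combination $\sum_{i=1}^n\lambda_i\psi(e_ix^j)$ has degree $\le m+j-1$ and $x^{m+j-1}$-coefficient $\lambda(ja_mI+B_{m-1})$. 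Because $B_{m-1}$ has at most $n$ eigenvalues and $a_m\neq0$ while $K$ has characteristic zero, $ja_mI+B_{m-1}$ is invertible for all but finitely many integers~$j$; fix a non-negative integer $J$ such that it is invertible for every integer $j>J$.

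Then, for each integer $j>J$ the map $\lambda\mapsto\lambda(ja_mI+B_{m-1})$ is a bijection of~$K^n$, so for any standard vector $e_k$ we may choose $\lambda$ with $\lambda(ja_mI+B_{m-1})=e_k$; the resulting $q=\sum_{i=1}^n\lambda_i\psi(e_ix^j)$ then lies in $\im(\phi_V)\cap K[x]^n$, has degree exactly $m+j-1$, and satisfies $\lt(q)=e_kx^{m+j-1}$. Letting $j$ run through the integers greater than $J$ shows that every monomial $e_kx^\ell$ with $\ell\ge m+J$ is the leading term of some element of $\im(\phi_V)\cap K[x]^n$, hence is excluded from the generating set of $N_V$. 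Consequently $N_V$ is spanned by the finite set $\{e_kx^\ell\mid 1\le k\le n,\ 0\le\ell<m+J\}$, so $\dim_K N_V\le n(m+J)<\infty$, as claimed. Here I also use the standard fact (built into the definition of $N_V$) that the monomials lying outside the leading-term set of a $K$-subspace of $K[x]^n$ span a complement of it.

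The one genuinely delicate point is the degree bookkeeping in the second step: the hypothesis $\deg_x(B)\le\deg_x(a)-1$ is precisely what prevents the term $x^j(e_iB)$ from overtaking $jae_ix^{j-1}$, so that the top coefficient of $\psi(e_ix^j)$ is governed by the single matrix $ja_mI+B_{m-1}$ (which becomes invertible once $j$ is large) rather than by the leading coefficient matrix of~$B$. Without that hypothesis the attainable leading terms would be confined to the row space of that matrix, and $N_V$ could fail to be finite-dimensional; everything else in the argument is the routine identity $\phi_V(uw)=aw'+wB$ together with elementary linear algebra.
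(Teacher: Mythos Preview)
Your proof is correct. The key identity $\phi_V(uw)=aw'+wB$ is right (the $au'$ terms cancel), and from there the leading-term analysis is clean: the hypothesis $\deg_x(B)\le\deg_x(a)-1$ makes the top coefficient of $\psi(e_ix^j)$ equal to $e_i(ja_mI+B_{m-1})$, and characteristic zero ensures this matrix is invertible for all large~$j$.

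The route differs from the paper's. The paper clears denominators by passing to $\tilde\phi_V(p)=u^2\phi_V(p)=aup'-au'p+upB$, observes that multiplication by $u^2$ identifies $\im(\phi_V)\cap K[x]^n$ with $\im(\tilde\phi_V)\cap u^2K[x]^n$, and then bounds $\dim_K N_V$ by the codimension inequality
\[
\codim_K\bigl(\im(\tilde\phi_V)\cap u^2K[x]^n\bigr)\le\codim_K\bigl(\im(\tilde\phi_V)\bigr)+\codim_K\bigl(u^2K[x]^n\bigr),
\]
invoking \cite[Proposition~12]{chen16a} for the finiteness of the first summand. You instead restrict the input: the substitution $p=uw$ realises the $u=1$ map $\psi(w)=aw'+wB$ as a submap of $\phi_V$ landing in $K[x]^n$, and you then carry out the leading-term argument (essentially the content of that cited proposition) directly. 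Your argument is more self-contained and yields an explicit bound $\dim_K N_V\le n(m+J)$; the paper's argument is more modular and also gives a bound, via $\codim_K(u^2K[x]^n)=2n\deg_x(u)$ plus the bound from the earlier paper. Both approaches are at heart the same reduction to the $u=1$ situation, executed on opposite sides of the map.
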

\begin{proof}
	Consider the map $\tilde\phi_V: K[x]^n\to K[x]^n$ defined by \[\tilde\phi_V(p)=aup^\prime-au^\prime p+upB\]
	for any $p\in K[x]^n$.
	%Let $\tilde N_V$ be the $K$-subspace of $K[x]^n$ generated by
	%\[\{t\in \S\mid t\neq \lt(p) \text{ for all }p\in \im(\tilde \phi_V)\cap K[x]^n\}.\]
	%Then we get two decompositions of $K[x]$:
	%\[ K[x]^n=(\im (\tilde\phi_V)\cap u^2K[x]^n)\oplus \tilde N_V=(\im (\phi_V)\cap K[x]^n)\oplus  N_V.\]
	Then $\tilde\phi_V(p)=u^2\phi_V(p)$. It is easy to check that $\im(\phi_V)\cap K[x]^n$ and $\im(\tilde\phi_V)\cap u^2 K[x]^n$ are isomorphic as $K$-vector spaces via the multiplication by $u^2$. %Therefore $N_V$ is isomorphic to $\tilde N_V$, and in particular $\dim_K(N_V)=\dim_K(\tilde N_V)$.
	Considering the codimension of subspaces in $K[x]^n$ over $K$, we have the formula
	\begin{align*}
	\dim_K( N_V)&=\codim_K\left(\im(\phi_V)\cap K[x]^n\right) \\ \nonumber
	&=\codim_K\left( \im(\tilde\phi_V)\cap u^2K[x]^n \right)\\
	&\leq \codim_K\left(\im(\tilde\phi_V)\right)+\codim_K\left( u^2K[x]^n\right).
	\end{align*}
	
	Let $\mu:=\deg_x(a)-1$, $\ell:=\deg_x(u)$ and $s:=\deg_x(p)$. Since $\deg_x(aup^\prime)=\deg_x(au^\prime p)=s+\ell+\mu$, we have \[\deg_x(\tilde\phi_V(p))\leq s+\ell+\max\{\mu,\deg_x(B)\}.\]
	By an argument analogous to \cite[Proposition 12]{chen16a}, we distinguish two cases $\deg_x(B)<\mu$ and $\deg_x(B)=\mu$, and get the codimension of $\im(\tilde\phi_V)$ is finite. Since $K[x]^n/(u^2K[x]^n)\cong (K[x]/u^2K[x])^n$, the codimension of $u^2K[x]^n$ is also finite.
\end{proof}

The condition $\deg_x(B)\leq\deg_x(a)-1$ is satisfied if $V$ is a local integral basis at infinity, but may not hold for an arbitrary basis. So we introduce a weaker basis that satisfies the degree condition. This is an analogue of suitable basis at infinity.

\begin{defi}
	A basis $V$ of $A$ is called \emph{suitable at infinity}
	if for $a\in K[x]$ and $B\in K[x]^{n\times n}$ such that $aV'=BV$ we have
	$\deg_x(B)<\deg_x(a)$.
\end{defi}

There always exists a basis which is suitable at infinity. We can find such a basis as follows.
Start from an arbitrary $K(x)$-basis $V=(v_1,\dots,v_n)$ of the function field~$A$.
We can make its elements $v_1,\dots,v_n$ integral at infinity by replacing each $v_i$ by $x^{-\tau} v_i$
for a sufficiently large $\tau\in\set N$.
Consider $a\in K[x]$ and $B\in K[x]^{n\times n}$ be such that $aV'=BV$.
If $\deg_x(B)<\deg_x(a)$, we are done.
If not, consider a row $b$ in $B$ with $\deg_x(b)\geq\deg_x(a)$ and set $v=x a^{-1} bV$.
Then $v$ is integral at infinity (because at infinity differentiating increases the valuation)
but it does not belong to the $K(x)_\infty$-module generated by~$V$ (because of $\deg_x(b)\geq\deg_x(a)$).
Therefore the $K(x)_\infty$-module generated by $V$ and $v$ is strictly larger than the
$K(x)_\infty$-module generated by~$V$. Replace $V$ by a basis of this enlarged module, and
update $a$ and $B$ such that $aV'=BV$. If we now have $\deg_x(B)<\deg_x(a)$, we are done,
otherwise repeat the process just described.

The iteration will terminate because with every update of $V$ the $K(x)_\infty$-module generated
by it gets enlarged, and since all these modules are contained in the module of elements of $A$
that are integral at infinity, after at most finitely many updates $V$ will be a local integral
basis at infinity. At least then, the desired degree condition must hold, because otherwise
there would be an integral element which is not a $K(x)_\infty$-linear combination of the basis
elements, in contradiction to the basis being integral at infinity.

\begin{theorem}\label{thm:add decomp}
	Let $f$ be an element in $A$. Let $V$ be a basis of $A$ which is suitable at infinity. Then there exists a suitable basis $W$ of $A$ such that $f$ admits an additive decomposition { in~\eqref{eq:add decom} with respect to the bases }$W$ and $V$.
\end{theorem}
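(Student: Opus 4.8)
The plan is to run the lazy Hermite reduction of Section~\ref{SECT:lazy} to remove the multiple finite poles of $f$ and then to apply the polynomial reduction of the present section to the surviving part, after transporting it into the basis~$V$. I would first fix $a\in K[x]$ and $B\in K[x]^{n\times n}$ with $aV'=BV$; since $V$ is suitable at infinity we may take $\deg_x(B)<\deg_x(a)$, so that by the proposition above the standard complement $N_V$ attached to any polynomial $u$ is a finite-dimensional $K$-vector space. Applying the lazy Hermite reduction to $f$ yields a suitable basis $W=(\omega_1,\dots,\omega_n)$, an element $g_0\in A$, and a remainder $h$ with $f=g_0'+h$ and $h=\frac1{de}\sum_{i=1}^n h_i\omega_i$ as in~\eqref{eq:lazy remainder} (in particular $d$ is squarefree and $\gcd(d,e)=1$), where $eW'=MW$. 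By the extended Euclidean algorithm, as in~\eqref{eq:euclidean}, I would split $h=\frac1d PW+\frac1e SW$ with $P,S\in K[x]^n$ and $\deg_x(P)<\deg_x(d)$; here the summand $\frac1d PW$ already has the shape of the middle term of~\eqref{eq:add decom}.

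To deal with $\frac1e SW$, I would rewrite it in the basis $V$: there are $b\in K[x]$ and $Q_1\in K[x]^n$ with $\frac1e SW=\frac1b Q_1V$. Replacing $a$ by $\operatorname{lcm}(a,b)$ and $B$ by the corresponding matrix --- which keeps $\deg_x(B)<\deg_x(a)$ intact, so it does not spoil suitability of $V$ at infinity --- one may assume $\frac1e SW=\frac1a Q_1V$ with $Q_1\in K[x]^n$. I would then fix the polynomial~$u$ in the polynomial reduction to be a sufficiently divisible polynomial, for concreteness $u=u_0\ell$, where $\ell\in K[x]$ is the common denominator of $W$ expressed in the basis $V$ (so that $W=\frac1\ell C_0V$ with $C_0\in K[x]^{n\times n}$) and $u_0\in K[x]$ is the denominator bound of Lemma~\ref{lem:bound} for $h$, which depends only on $\disc(W)$. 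Applying the polynomial reduction of $Q_1$ with respect to $u$ and $V$ produces $p_1\in K[x]^n$ and $Q\in N_V$ with $\frac1a Q_1V=\bigl(\frac{p_1}u V\bigr)'+\frac1a QV$. Putting $g:=g_0+\frac{p_1}u V\in A$ then gives
\[
f=g'+\frac1d PW+\frac1a QV,
\]
an expression of the form~\eqref{eq:add decom}: $d$ is squarefree and coprime to $e$, $\deg_x(P)<\deg_x(d)$, and $Q\in N_V$.

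Finally I would check that $P$ and $Q$ are both zero precisely when $f$ is integrable in~$A$. If $P=Q=0$ then $f=g'$ is integrable. Conversely, suppose $f$ is integrable; then $h=f-g_0'$ is integrable, so $d\in K$ by Theorem~\ref{thm:lemma 9}, and $\deg_x(P)<\deg_x(d)=0$ forces $P=0$. Hence $\frac1a QV=h-\bigl(\frac{p_1}u V\bigr)'$ is integrable; writing $H'=h$ with $H\in A$, the antiderivative of $\frac1a QV$ is $H-\frac{p_1}u V$. By Lemma~\ref{lem:bound} the denominator of $H$ relative to $W$ divides $u_0$, hence --- using $W=\frac1\ell C_0V$ --- the denominator of $H$ relative to $V$ divides $u_0\ell=u$, so $H-\frac{p_1}u V=\frac{p_2}u V$ for some $p_2\in K[x]^n$. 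Then $\frac1a QV=\bigl(\frac{p_2}u V\bigr)'=\frac1a\phi_V(p_2)V$, whence $Q=\phi_V(p_2)\in\im(\phi_V)\cap K[x]^n$; since also $Q\in N_V$ and $N_V\cap\bigl(\im(\phi_V)\cap K[x]^n\bigr)=\{0\}$ by the definition of the standard complement, we conclude $Q=0$.

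The hardest part will be this last step: one needs the denominator bound of Lemma~\ref{lem:bound}, carried through the change of basis between $W$, an integral basis of $A$, and $V$, to be sharp enough to put the antiderivative of $\frac1a QV$ inside $\{\frac pu V : p\in K[x]^n\}$, which is exactly what forces $Q$ into $\im(\phi_V)$ and hence to $0$. Arranging $u$ large enough for this while keeping $N_V$ a well-defined finite-dimensional space --- so that the degree condition $\deg_x(B)<\deg_x(a)$ survives every enlargement of $a$ --- is where the bookkeeping of the proof is concentrated.
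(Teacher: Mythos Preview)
Your proof plan is correct and follows essentially the same route as the paper's proof: run lazy Hermite reduction to obtain a suitable basis $W$ and the splitting~\eqref{eq:euclidean}, transport the $\frac1e$-part into the basis $V$ after enlarging $a$ to absorb the change-of-basis denominator, then apply the polynomial reduction with respect to a denominator bound $u$ coming from Lemma~\ref{lem:bound} times the change-of-basis denominator; the vanishing of $P$ when $f$ is integrable is Theorem~\ref{thm:lemma 9}, and the vanishing of $Q$ follows because the antiderivative then has denominator dividing $u$, forcing $Q\in\im(\phi_V)\cap K[x]^n\cap N_V=\{0\}$. Your write-up is in fact slightly more careful than the paper's on one point: you choose $u=u_0\ell$ unconditionally as an a priori bound, whereas the paper introduces $u=\tilde u\,b$ via the conditional clause ``if $f$ is integrable'' and then uses it unconditionally in the polynomial reduction; your formulation makes the logic cleaner.
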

\begin{proof}
	We present a constructive proof to show the existence of an additive decomposition of $f$. After performing the lazy Hermite reduction on $f$, we get
	\[f=\tilde g^\prime+\frac{1}{d}PW+\frac{1}{e}UW\]
	where $W$ is a suitable basis, $P=(r_1,\ldots,r_n)\in K[x]^n$ and $U=(s_1,\ldots,s_n)\in K[x]^n$ with $r_i,s_i$ introduced in~\eqref{eq:euclidean}. Let $W=\frac{1}{b}CV$ for some $b\in K[x]$ and $C\in K[x]^{n\times n}$. Let $a\in K[x]$ and $B\in K[x]^{n\times n}$ be such that $aV^\prime=BV$. Multiplying $a$ and $B$ by some polynomial, we may assume that $a$ is a common multiple of $e$ and $b$. Rewriting the remainder in terms of the new basis $V$, we get
	\begin{equation}\label{eq: decom 1}
	\frac{1}{e}UW=\frac{1}{eb}UCV=\frac{1}{a}\tilde UV,
	\end{equation}
	for some $\tilde U\in K[x]^n$. By Theorem~\ref{lem:bound}, if $f$ is integrable, there exist $\tilde u\in K[x]$ and $R\in K[x]^n$ such that
	\begin{equation}\label{eq:decom 2}
	\frac{1}{e}UW=\left(\frac{RW}{\tilde u}\right)^\prime=\left(\frac{RCV}{ u}\right)^\prime,
	\end{equation}
	where $u=\tilde u b$. Next, we apply the polynomial reduction with respect to the polynomial $u$ and decompose $\tilde U$ into $\phi_V(\tilde U_1)+\tilde U_2$ with $\tilde U_1,\tilde U_2 \in K[x]^n$ and $\tilde U_2\in N_V$. Then we have
	\begin{equation}\label{eq:decom 3}
	\frac{1}{a}\tilde UV=\left(\frac{1}{u}\tilde U_1V\right)^\prime+\frac{1}{a}\tilde U_2V.
	\end{equation}
	We then get the decomposition~\eqref{eq:add decom} by setting $g=\tilde g+\frac{1}{u}\tilde U_1V$ and $Q=\tilde U_2$.
	
	Assume that $f$ is integrable. Then Theorem~\ref{lem:bound} implies that $d\in K$. Since $\deg_x(P)<\deg_x(d)$, we have $P=0$. Combining \eqref{eq:decom 3}, \eqref{eq: decom 1} and~\eqref{eq:decom 2} yields that
	\begin{equation}
	\frac{1}{a} Q V= \frac{1}{a}\tilde UV-\left(\frac{1}{u}\tilde U_1V\right)^\prime=\left(\frac{1}{u}\tilde QV\right)^\prime,
	\end{equation}
	where $\tilde Q=RC-\tilde U_1$. So $Q=\phi_V(\tilde Q)\in \im(\phi_V)\cap K[x]^n$. Since $\im(\phi_V)\cap K[x]\cap N_V=\{0\}$, it follows that $Q=0$.
\end{proof}
\begin{example}
	We continue with Example~\ref{ex:doulbe root at infy} by applying the polynomial reduction to the lazy Hermite remainder $h=\frac{1}{3x(x^2+1)}(x^2+1)y$. Since $\disc(W)=4(x^2+1)^2x$, we choose $u=x^2+1$. Note that $W$ is already a suitable basis at infinity. The map for the polynomial reduction with respect to $W$ and $u$ is $\phi(p)=\frac{1}{u^2}(eup^\prime-eu^\prime p+epM)$ for any $p\in K[x]^n$. Then $h=(\frac{1}{3(x^2+1)}2(x^2+1)y)^\prime$ reduces to~$0$.
\end{example}

\section{Reduction-Based Telescoping} \label{SECT:tele}

Lazy Hermite reduction in combination with the polynomial reduction just described can be used for deciding
whether a given algebraic function admits an algebraic integral.
Most algebraic functions don't.
The next question of interest may then be whether the algebraic function at hand can be deformed in some way
to a related one that is algebraically integrable.
Creative telescoping produces such a deformation.
It is applicable to functions~$f$ which besides the integration variable $x$ involve some other parameter~$t$.
The task of creative telescoping is to find a nonzero operator $L(t,D_t)$ such that $L(t,D_t)\cdot f$ is integrable.
Such an operator is called a \emph{telescoper} for~$f$.

Several techniques are known for computing such a telescoper. The so-called reduction based approach is one
of them, and it has attracted a lot of attention in recent years. It was first proposed for rational functions~\cite{bostan10b}.
Given $f=p/q\in K(x)$ with $K=C(t)$, we can use Hermite reduction to find $g_i,h_i\in K(x)$ ($i=0,1,2,\dots$) such that
$D_t^i f = D_x g_i + h_i$. If $c_0,\dots,c_r\in K$ are such that $c_0 h_0+\cdots+c_rh_r=0$, then
\[
(c_0+\cdots +c_rD_t^r)\cdot f = D_x\cdot( c_0 g_0 + \cdots + c_r g_r),
\]
so the operator $L=c_0+\cdots+c_rD_t^r$ is a telescoper for~$f$.

Some recomputation can be avoided by observing that instead of $D_t^i f$ we may as well integrate $D_t h_{i-1}$,
because $D_x$ and $D_t$ commute. With this optimization, reduction based telescoping can be summarized as follows.

\begin{algorithm}
	Input: a function $f$ depending on $x$ and~$t$;\\
	Output: a telescoper for $f$
	
	\step 11 find $g_0,h_0$ such that $f=g_0'+h_0$.
	\step 21 for $r=1,2,3,\dots$ do:
	\step 32 if $h_0,\dots,h_{r-1}$ are linearly dependent over $K$
	\step 43 return $c_0+c_1D_t+\cdots+c_{r-1}D_t^{r-1}$ with $c_i$ not all zero and $\sum_{i=0}^{r-1} c_i h_i = 0$.
	\step 52 find $g_r,h_r$ such that $D_t h_{r-1}=g_r'+h_r$
\end{algorithm}

The termination of this procedure can be secured in two ways.
One way is to ensure that the remainders $h_0,h_1,\dots$ belong to a finite dimensional $K$-vector space.
Then the remainders must eventually become linearly dependent.
The second way is to ensure that the map which sends $f$ to $h$ such that $f=g'+h$ for some $g$ is $K$-linear
and has the set of all integrable elements as its kernel.
It is then guaranteed that the procedure will not miss a telescoper, so it will terminate because we know that
for every algebraic function there does exist a telescoper.

Besides for rational functions, both arguments have been worked out for various larger classes of
functions~\cite{bostan13a,chen15a,chen17a,bostan18a,hoeven20},
including the class of algebraic functions~\cite{chen16a}. The version for algebraic functions is uses Trager's
Hermite reduction followed by a polynomial reduction, both steps requiring an integral basis of the function
field. Using {Theorem}~\ref{thm:add decomp}, we will argue that reduction based telescoping also works with lazy Hermite reduction
and the variant of polynomial reduction developed in the previous section, with the obvious advantage that no
integral bases computation is needed.

For doing so, we must take into consideration that lazy Hermite reduction takes a suitable basis as input but
may return the result with respect to an adjusted suitable basis. Let $W_0,W_1,\dots$ denote the suitable
bases with respect to which the result of the $i$th call to lazy Hermite reduction is returned. By supplying
$W_{i-1}$ as input to the $i$th call, we can ensure that the $K[x]$-module generated by $W_{i-1}$ is
contained in the $K[x]$-module generated by~$W_i$, for every~$i$. Therefore, if $W_r\neq W_{r-1}$ for
some~$r$, we can rewrite all remainders $h_0,\dots,h_{r-1}$ in the bases $W_r$ and $V$ without introducing
new denominators. (Note that no update is required for~$V$.)
In order to meet the conditions specified in {Theorem}~\ref{thm:add decomp}, it may be necessary to rerun
the polynomial reduction on the new representations of the old remainders. The termination of the algorithm
then follows via {the second way} indicated above.
In order to also justify termination in the first way, it suffices to observe that we can keep $V$ fixed
throughout the computation, so the termination follows directly from the finite dimension of~$N_V$.

%// TO BE DISCUSSED: INSTEAD OF STARTING FROM $D_t^i f$, START FROM THE $t$-DERIVATIVE OF THE $(i-1)$ST REMAINDER.
%TWO OPTIONS: TAKE THE REMAINDER BEFORE OR AFTER POLYNOMIAL REDUCTION.
%(NEEDED?)

%// IF SPACE PERMITS, WE CAN GIVE AN EXAMPLE.

\section{Experiments} \label{SECT:experiments}

For {the paper~\cite{chen12d} in 2012}, we have created a collection of about 100
integration problems, mostly originating from applications in
combinatorics~\cite{pemantle08,bostan16b}, and we have compared the performance
of six different approaches, including Chyzak's
algorithm~\cite{chyzak97,chyzak00,koutschan13} as implemented by
Koutschan~\cite{koutschan09,koutschan10c}, Koutschan's ansatz-based
method~\cite{koutschan10b}, as well as the method based on residue elimination
we proposed in~\cite{chen12d}. The result of the evaluation was somewhat
inconclusive. Most algorithms outperformed the other algorithms at least
for some examples. At the same time, the timing differences can be significant.

For the present paper, we have evaluated the performance of reduction based
creative telescoping using lazy Hermite reduction on the benchmark set from
2012. The runtime was taken on the same computer (a 64bit Linux server with 24
cores running at 3GHz; in 2012 it had 100G RAM, meanwhile it was upgraded to
700G).  The new experiments were performed with a more recent version of
Mathematica (Mathematica 12.1.1). Timings and code are available on our
website~\cite{website}.

An interesting observation is that in all examples of the collection (at least
those which finished within the specified time limit of 30h), the lazy Hermite
reduction procedure never encountered a situation where the linear
system~\eqref{eq:sys} did not have a unique solution so that a basis change
would have been required. In almost all cases, we also found that no basis
change was needed before entering the reduction process in order to turn to a
suitable basis. Examples where the default basis was not suitable and needed to
be adjusted generally took much more time than examples where the default basis
was already suitable.

In comparison to the earlier methods, the inclusion of reduction based creative
telescoping does not change the general observation that no algorithm is clearly
superior to the others. The new approach is faster than the other approaches on
some of the examples, while on others it is much worse.

\begin{example}
	Consider the rational function
	\[
	f(x,y,t)=(1-x-y-t+\tfrac34(xy+xt+yt))^{-1}.
	\]
	The computation of an annihilating operator for $\res_{x,y} f(x,y,\frac{t}{x^3y})$ is
	completed in about 4.5 seconds by our new approach. The other techniques need at least
	twice as long and up to 500 seconds.
	In this example, the telescoper has order~6 and coefficients of degree~15.
	On the other hand, computing an annihilating operator for $\res_{x,y} f(x,y,\frac{t}{x^2y^2})$
	takes about half an hour using the reduction based approach, while all other approaches
	all need less than 90 seconds.
	Here the telescoper has order~4 and coefficients of degree~11.
	It is not clear why the algorithms perform so differently on such closely related input.
\end{example}

In view of this diverse and unpredictable performance of the various algorithms, it is
advantageous to have several independent techniques available. Having more techniques at our
disposal increases the chances that a hard instance arising from an application can be
completed by at least one of them.

%\section{Conclusion}
%// ???

\bibliographystyle{plain}
\bibliography{integral}

\begin{thebibliography}{10}

\bibitem{website}
http://www.algebra.uni-linz.ac.at/people/mkauers/lazyhermite/.

\bibitem{Abel1826}
Niels~Henrik Abel.
\newblock {\em \OE uvres compl\`etes. {T}ome {I}}.
\newblock \'{E}ditions Jacques Gabay, Sceaux, 1992.
\newblock Edited and with a preface by L. Sylow and S. Lie, Reprint of the
  second (1881) edition.

\bibitem{Abhyankar1976}
Shreeram~S. Abhyankar.
\newblock Historical ramblings in algebraic geometry and related algebra.
\newblock {\em Amer. Math. Monthly}, 83(6):409--448, 1976.

\bibitem{Almkvist1990}
Gert Almkvist and Doron Zeilberger.
\newblock The method of differentiating under the integral sign.
\newblock {\em J. Symbolic Comput.}, 10:571--591, 1990.

\bibitem{bostan10b}
Alin Bostan, Shaoshi Chen, Fr{\'e}d{\'e}ric Chyzak, and Ziming Li.
\newblock Complexity of creative telescoping for bivariate rational functions.
\newblock In {\em Proceedings of ISSAC'10}, pages 203--210, 2010.

\bibitem{bostan13a}
Alin Bostan, Shaoshi Chen, {Fr\'ed\'eric} Chyzak, Ziming Li, and Guoce Xin.
\newblock Hermite reduction and creative telescoping for hyperexponential
  functions.
\newblock In {\em Proceedings of ISSAC'13}, pages 77--84, 2013.

\bibitem{bostan18a}
Alin Bostan, Fr{\'e}d{\'e}ric Chyzak, Pierre Lairez, and Bruno Salvy.
\newblock Generalized hermite reduction, creative telescoping and definite
  integration of d-finite functions.
\newblock In {\em Proc. ISSAC'18}, pages 95--102, 2018.

\bibitem{bostan16b}
Alin Bostan, {Fr\'ed\'eric} Chyzak, Mark van Hoeij, Manuel Kauers, and Lucien
  Pech.
\newblock Hypergeometric expressions for generating functions of walks with
  small steps in the quarter plane.
\newblock {\em European Journal of Combinatorics}, 61:242--275, 2017.

\bibitem{bostan13b}
Alin Bostan, Pierre Lairez, and Bruno Salvy.
\newblock Creative telescoping for rational functions using the
  {G}riffiths-{D}work method.
\newblock In {\em Proceedings of ISSAC'13}, pages 93--100, 2013.

\bibitem{bronstein98a}
M.~Bronstein.
\newblock The lazy {H}ermite reduction.
\newblock Technical Report 3562, INRIA, 1998.

\bibitem{Bronstein1990}
Manuel Bronstein.
\newblock Integration of elementary functions.
\newblock {\em J. Symbolic Comput.}, 9(2):117--173, 1990.

\bibitem{bronstein98}
Manuel Bronstein.
\newblock Symbolic integration tutorial.
\newblock ISSAC'98, 1998.

\bibitem{BronsteinBook}
Manuel Bronstein.
\newblock {\em {S}ymbolic {I}ntegration {I}: {T}ranscendental {F}unctions},
  volume~1 of {\em Algorithms and Computation in Mathematics}.
\newblock Springer-Verlag, Berlin, second edition, 2005.

\bibitem{chen15a}
Shaoshi Chen, Hui Huang, Manuel Kauers, and Ziming Li.
\newblock A modified {A}bramov-{P}etkovsek reduction and creative telescoping
  for hypergeometric terms.
\newblock In {\em Proceedings of ISSAC'15}, pages 117--124, 2015.

\bibitem{chen16a}
Shaoshi Chen, Manuel Kauers, and C.~Koutschan.
\newblock Reduction-based creative telescoping for algebraic functions.
\newblock In {\em Proc. ISSAC'16}, pages 175--182, 2016.

\bibitem{chen12d}
Shaoshi Chen, Manuel Kauers, and Michael~F. Singer.
\newblock Telescopers for rational and algebraic functions via residues.
\newblock In {\em Proceedings of ISSAC'12}, pages 130--137, 2012.

\bibitem{chen17a}
Shaoshi Chen, Mark van Hoeij, Manuel Kauers, and Christoph Koutschan.
\newblock Reduction-based creative telescoping for fuchsian d-finite functions.
\newblock {\em J. Symb. Comput.}, 85:108--127, 2018.

\bibitem{chyzak97}
Fr{\'e}d{\'e}ric Chyzak.
\newblock {G}r{\"o}bner bases, symbolic summation and symbolic integration.
\newblock In {\em {G}r{\"o}bner Bases and Applications}. Cambridge University
  Press, 1997.

\bibitem{chyzak00}
Fr{\'e}d{\'e}ric Chyzak.
\newblock An extension of {Z}eilberger's fast algorithm to general holonomic
  functions.
\newblock {\em Discrete Mathematics}, 217:115--134, 2000.

\bibitem{Davenport1981}
James~Harold Davenport.
\newblock {\em On the {I}ntegration of {A}lgebraic {F}unctions}, volume 102 of
  {\em Lecture Notes in Computer Science}.
\newblock Springer-Verlag, Berlin, 1981.

\bibitem{geddes92}
Keith~O. Geddes, Stephen~R. Czapor, and George Labahn.
\newblock {\em Algorithms for Computer Algebra}.
\newblock Kluwer, 1992.

\bibitem{Hermite1872}
Charles Hermite.
\newblock Sur l'int\'egration des fractions rationnelles.
\newblock {\em Ann. Sci. \'Ecole Norm. Sup. (2)}, 1:215--218, 1872.

\bibitem{koutschan09}
Christoph Koutschan.
\newblock {\em Advanced Applications of the Holonomic Systems Approach}.
\newblock PhD thesis, Johannes Kepler University, 2009.

\bibitem{koutschan10b}
Christoph Koutschan.
\newblock A fast approach to creative telescoping.
\newblock {\em Mathematics in Computer Science}, 4(2--3):259--266, 2010.

\bibitem{koutschan10c}
Christoph Koutschan.
\newblock {HolonomicFunctions (User's Guide)}.
\newblock Technical Report 10-01, RISC Report Series, University of Linz,
  Austria, January 2010.

\bibitem{koutschan13}
Christoph Koutschan.
\newblock Creative telescoping for holonomic functions.
\newblock In {\em Computer Algebra in Quantum Field Theory: Integration,
  Summation and Special Functions}, Texts and Monographs in Symbolic
  Computation, pages 171--194. Springer, 2013.

\bibitem{Liouville1833a}
Joseph Liouville.
\newblock Premier {m}\'{e}moires sur la d\'{e}termination des int\'{e}grales
  dont la valeur est alg\'{e}brique.
\newblock {\em J. {E}cole Polytech.}, 14:124--148, 1833.

\bibitem{Liouville1833b}
Joseph Liouville.
\newblock Second {m}\'{e}moires sur la d\'{e}termination des int\'{e}grales
  dont la valeur est alg\'{e}brique.
\newblock {\em J. {E}cole Polytech.}, 14:149--193, 1833.

\bibitem{Lutzen1990}
Jesper L\"{u}tzen.
\newblock {\em Joseph {L}iouville 1809--1882: master of pure and applied
  mathematics}, volume~15 of {\em Studies in the History of Mathematics and
  Physical Sciences}.
\newblock Springer-Verlag, New York, 1990.

\bibitem{Manin1958}
Ju.~I. Manin.
\newblock Algebraic curves over fields with differentiation.
\newblock {\em Izv. Akad. Nauk SSSR. Ser. Mat.}, 22:737--756, 1958.

\bibitem{pemantle08}
Robin Pemantle and Mark~C. Wilson.
\newblock Twenty combinatorial examples of asymptotics derived from
  multivariate generating functions.
\newblock {\em SIAM Review}, 50(2):199--272, 2008.

\bibitem{picard33}
Emile Picard.
\newblock Sur les periodes des integrales doubles et sur une classe d'equations
  differentielles lineaires.
\newblock {\em Annales scientifiques de l'E.N.S.}, 50:393--395, 1933.

\bibitem{Risch1970}
Robert~H. Risch.
\newblock The solution of the problem of integration in finite terms.
\newblock {\em Bull. Amer. Math. Soc.}, 76:605--608, 1970.

\bibitem{Ritt1948}
Joseph~Fels Ritt.
\newblock {\em Integration in {F}inite {T}erms. {L}iouville's {T}heory of
  {E}lementary {M}ethods}.
\newblock Columbia University Press, New York, N. Y., 1948.

\bibitem{Ritt1950}
Joseph~Fels Ritt.
\newblock {\em Differential {A}lgebra}.
\newblock American Mathematical Society Colloquium Publications, Vol. XXXIII.
  American Mathematical Society, New York, N. Y., 1950.

\bibitem{Trager84}
Barry~M. Trager.
\newblock {\em Integration of {A}lgebraic {F}unctions}.
\newblock PhD thesis, MIT, 1984.

\bibitem{hoeven20}
Joris van~der Hoeven.
\newblock Constructing reductions for creative telescoping.
\newblock {\em Applicable Algebra in Engineering, Communication and Computing},
  pages 1--28, 2020.

\bibitem{Zeilberger1990}
Doron Zeilberger.
\newblock A holonomic systems approach to special functions identities.
\newblock {\em J. Comput. Appl. Math.}, 32:321--368, 1990.

\bibitem{Zeilberger1991}
Doron Zeilberger.
\newblock The method of creative telescoping.
\newblock {\em J. Symbolic Comput.}, 11(3):195--204, 1991.

\end{thebibliography}

\end{document}